\newacronym{cps}{CPS}{Cyber-Physical System}
\newacronym{ids}{IDS}{Intrusion Detection System}
\newacronym{idss}{IDSs}{Intrusion Detection Systems}
\newacronym{iot}{IoT}{Internet of Things}
\newacronym{wsn}{WSN}{Wireless Sensor Network}
\newacronym{ecc}{ECC}{Elliptic Curve Cryptography}
\newacronym{sgw}{SGW}{SinGle chaining Watermark}
\newacronym{lsb}{LSB}{Least Significant Bit}
\newacronym{lwc}{LWC}{Lightweight Chained Watermarking}
\newacronym{fwcd}{FWC-D}{lightweight fragile watermarking scheme}
\newacronym{rwfs}{RWFS}{Randomized Watermarking Filtering Scheme}
\newacronym{mac}{MAC}{Medium Access Control}
\newacronym{prw}{PRW}{position random watermark}
\newacronym{pufs}{PUFs}{physical unclonable functions}
\newacronym{rssi}{RSSI}{Received Signal Strength Indicator}
\newacronym{act}{ACT}{Asymmetric Cryptography Technique}
\newacronym{ddos}{DDoS}{Distributed Denial of Service}
\newacronym{zwt}{ZWT}{Zero-Watermarking Scheme}
\newacronym{aes}{AES}{Advanced Encryption Standard}
\newacronym{des}{DES}{Data Encryption Standard}
\newcommand{\ie}{i.\,e.,~}
\newcommand{\eg}{e.\,g.,~}
\begin{document}

\title{ZIRCON: Zero-watermarking-based approach for data integrity and secure provenance in IoT networks}


\author{Omair Faraj\inst{1,2} \and David Meg\'ias  \inst{1}  \and Joaquin Garcia-Alfaro \inst{2}}

\institute{
Internet Interdisciplinary Institute (IN3), Universitat Oberta de Catalunya (UOC), CYBERCAT-Center for Cybersecurity Research of Catalonia, Barcelona, Spain
\and SAMOVAR, T\'el\'ecom SudParis, Institut Polytechnique de Paris, Palaiseau, France.
}

\maketitle

\begin{abstract}
The \gls*{iot} is integrating the Internet and smart devices in almost every domain such as home automation, e-healthcare systems, vehicular networks, industrial control and military applications. In these sectors, sensory data, which is collected from multiple sources and managed through intermediate processing by multiple nodes, is used for decision-making processes. Ensuring data integrity and keeping track of data provenance is a core requirement in such a highly dynamic context, since data provenance is an important tool for the assurance of data trustworthiness. Dealing with such requirements is challenging due to the limited computational and energy resources in \gls*{iot} networks. This requires addressing several challenges such as processing overhead, secure provenance, bandwidth consumption and storage efficiency. In this paper, we propose ZIRCON, a novel zero-watermarking approach to establish end-to-end data trustworthiness in an \gls*{iot} network. 
In ZIRCON, provenance information is stored in a tamper-proof centralized network database through watermarks, generated at source node before transmission. We provide an extensive security analysis showing the resilience of our scheme against passive and active attacks. We also compare our scheme with existing works based on performance metrics such as computational time, energy utilization and cost analysis. The results show that ZIRCON is robust against several attacks, lightweight, storage efficient, and better in energy utilization and bandwidth consumption, compared to prior art.\\

\textbf{Keywords:} Data Integrity; Data Provenance; Internet of Things; Intrusion Detection; Cryptography; Zero-Watermarking.
\end{abstract}

\section{Introduction}
\label{sec:intro}

The \gls*{iot} is an intelligent system composed of physical objects interconnected in a dynamic network infrastructure, which allows it to collect and exchange data between different sources and destinations~\cite{AMMAR2018,BOTTA2016}. These objects route data being captured from the environment to the unsafe Internet to be managed, processed and analyzed using different technologies. This makes it easier for an attacker to access the network and, thus, the system becomes vulnerable to intrusions~\cite{GUBBI2013}. Such intelligent systems are being used in various applications, such as healthcare systems, home appliances, car automation, industrial control, or environmental monitoring, among others~\cite{Lazarescu2017,Manikandan2018}. For these reasons, and for more than two decades, protecting and securing networks and information systems have been delivered through \gls*{idss}. It is difficult to apply traditional protection techniques (\eg cryptography, signature-based techniques, etc.) to \gls*{iot} networks due to some characteristics such as specific protocol stacks, constrained-resource devices, computational and power capabilities, storage limitations and network standards~\cite{SICARI2015}. At first, processing capabilities and storage limitations of network devices that host \gls*{ids} algorithms is a critical issue. In conventional networks, \gls*{ids} agents are deployed by network administrator in intermediate nodes that have high computing capacity ~\cite{zarpelao2017}. On the other hand, network nodes in \gls*{iot} environments are resource-constrained. Therefore, finding nodes with the ability to support \gls*{ids} agents is difficult in such systems. 

Another major issue is the network architecture. Specific nodes, such as routers and switches, are responsible for forwarding packets to final destinations that are directly connected to end systems in traditional networks. \gls*{iot} networks are generally multi-hop. In this case, network nodes forward packets simultaneously and act as end systems. Hence, for the sensed data packets to reach the final destination (\eg gateway, central processing unit, etc.) it will be forwarded through a path of sensor nodes placed on different light poles~\cite{zarpelao2017}. Sharing these data packets through the shared wireless medium expose the network to be vulnerable to several types of security attacks, such as data forgery, packet replay, data modification, data insertion, or packet drop attack~\cite{REN2008,KHAN2018}.

Consequently, there is a need for a lightweight \gls*{ids} scheme that maintains data integrity, trustworthiness and the ability to secure provenance to ensure that data is forwarded safely. Data provenance provides history of the data origin and how it is routed over time, which makes it an important tool for the assurance of data trustworthiness~\cite{Sultana2013,Dai2008}. Most of the previous research on provenance considered studying modeling, collecting and querying data provenance without focusing on its security. Moreover, very few approaches considered provenance in sensor networks. In such networks, there is a set of challenges to deploy provenance solutions. These challenges are (i) manage processing overhead of each individual node, (ii) efficiently transmit provenance while minimizing the additional bandwidth consumption, and (iii) transmit provenance securely from source to destination with the prompt react to any attack~\cite{Sultana2013}. 

In this paper, we provide a complete framework that deals with the above mentioned provenance challenges while ensuring data integrity in an \gls*{iot} network. These requirements can be achieved through watermarking techniques that are lightweight and require less computational and storage capabilities. Scalability is also an important requirement, since the size of provenance increases proportionally as the number of nodes engaged in the forwarding process increases. To address this issue, we introduce a centralized network database, connected to all nodes and gateway, that stores the provenance information at each hop. In this paper, we propose a zero-watermarking approach that securely communicates provenance information through a tamper-proof database and provides data integrity for real-time systems. A watermark is generated at the source \gls*{iot} device from the provenance information (IP address, data packet sensed time or received time) and a hash value of the data payload. The node stores this watermark in a tamper-proof centralized database and embeds it with the data packet to be sent through transmission channel. At intermediate nodes, the watermark is re-generated from original data for verification procedure. Finally, at the gateway, data integrity is verified through watermark re-generation process and the stored watermarks forming the complete provenance information are queried for validating provenance of data and constructing the data path. Hence, the proposed framework provides secure provenance transmission, scalability, secure transmission of watermarks and an in-route data integrity at each point in the network. 

The main contributions of this paper are summarized below:

\begin{itemize}
\item We propose a novel zero-watermarking scheme for \gls*{iot} networks to ensure data integrity in single- and multi-hop scenarios.
\item We provide a solution for the problem of secure provenance transmission in \gls*{iot} networks that is based on a zero-watermarking approach with a tamper-proof network database.
\item We analyze the security of our approach under two main adversary models: passive and external adversary.
\item We evaluate the performance of our approach with respect to some related techniques reported in the related literature. 
\end{itemize}

\noindent \textbf{Paper Organization ---} Section~\ref{sec:relatedWork} provides some preliminary background and surveys related work. Section~\ref{sec:sysmodel} presents the proposed zero-watermarking approach and solutions. Section~\ref{sec:security} analyzes the security of the proposed scheme. Section~\ref{sec:results} describes simulation results and analysis to validate our approach. Section~\ref{sec:discussion} provides discussion suggests some directions for future research. Finally, Section~\ref{sec:conclusion} concludes the paper.

\section{Background and Related Work}
\label{sec:relatedWork}

Digital watermarking is one of the well known advances in \gls*{wsn} security. It can detect if sensory data have been modified in a precise way, and prevents the interception of this data effectively. Additionally, it can be used for protecting content integrity of multimedia digital works as images, audio and video, and copyright information~\cite{van1994}. Digital watermarking has many advantages over other security techniques such as: 
\begin{enumerate}
    \item The three watermarking processes (generation, embedding and extraction) requires less energy than traditional encryption due to its light\-weight calculations.
    \item Carrier data directly holds watermark information without adding any network communication overhead~\cite{fei2006}.
    \item Digital-watermarking reduces end-to-end delay (due to lightweight watermark generation process) in a significant way compared to traditional security techniques with high complexity.
\end{enumerate}
There are two main types of digital watermarking: fragile watermarking and robust watermarking (based on anti-attack properties: fragile watermark becomes undetectable after data being modified, while robust watermarks can survive many forms of distortion)~\cite{cox2007,megias2021}. Robust watermarking is mainly used for copyright protection and is not sensitive to tampering. On the other hand, fragile watermarking is greatly sensitive to altering, and any change in the carrier leads to a failure in the extraction of watermark, that is used in verification of data integrity~\cite{perez2006}. 

Watermarking algorithms consist mainly from three processes: watermark generation, watermark embedding and watermark extraction and verification. Based on fragile watermarking algorithms, the source nodes collect data and generate a watermark. Then, this watermark is embedded in the original data using a predefined rule to construct a watermarked data packet that will be transferred to the destination node through the network. Through the transmission channel, the packet may be subject to many types of attacks and unauthorized access. The destination node receives the packet to extract the digital watermark and separates the original data based on the defined rule used at the source node. The restored data is then used to re-generate a watermark using the generation algorithm applied at the sensing node. Finally, the extracted watermark and the re-generated watermark will be compared to verify data integrity~\cite{Zhang2017}.

In this paper, we focus on securely transmitting captured data and provenance using zero-watermarking approach based on fragile watermarking. The works that are related to the proposed scheme fall into two classes: data integrity using watermarking and provenance security. The notation and their description used in this paper are listed in Table~\ref{tab:table1}. Below, we discuss fragile watermarking-based approaches and provenance security methods.

\subsection{Data Integrity using Digital Watermarking in WSN and IoT Networks}

The literature includes relevant watermarking algorithms used for data integrity and secure transmission in~\gls*{wsn} and~\gls*{iot} environments. Existing watermarking schemes can be classified into two main methods: regular watermarking schemes and zero-watermarking schemes. 

\subsubsection{Regular Watermarking Schemes}

In~\cite{Guo2007}, Guo et al.  propose a fragile digital watermarking algorithm called \gls*{sgw}, which verifies data integrity from the application layer of the data stream. The algorithm groups data based on the key and, then, the hash value of the data from the groups is calculated to be used as a watermark. To save bandwidth, the watermark is then embedded into the \gls*{lsb} of the data found in all groups. In this method, the watermark is used to link all groups and thus detect any deletion of data. Kamel et al. in ~\cite{Kamel2010} optimized Guo et al.'s method by proposing a \gls*{lwc} scheme. In \gls*{lwc}, a dynamic group size is used and the watermark is generated by calculating the hash value of two consecutive groups of data. This leads to less computational overhead, which is an improvement from \gls*{sgw} that calculates the hash value of each data element in the group. The security vulnerabilities of \gls*{sgw} and \gls*{lwc} are addressed in~\cite{Kamel2011}, by proposing a \gls*{fwcd}. In this scheme, the algorithm uses a serial number (SN) that is attached to each group to detect how many group insertions or deletions have occurred in case of any insertion or deletion attack. The watermark is generated using a hash function and a group serial number. All groups are chained with a digital watermark after embedding the watermark of the current group into the previous group in order to bypass any replay attack. To solve the issue of synchronization between sender and receiver nodes in single chaining techniques, a dual-chaining technique is proposed in~\cite{Wang2019}. It generates and embeds fragile watermarks into data using dynamic groups. A reversible watermarking-based algorithm for data integrity authentication is proposed in~\cite{Shi2013}. It applies prediction-error expansion for avoiding any loss in sensory data. Every two adjacent data items are grouped together, and the algorithm uses the first one to generate the watermark, and the other as a carrier for the watermark. Sun et al.~\cite{Sun2013} propose a lossless digital watermarking approach which embeds the generated watermark in the redundant space of data fields. The method does not increase data storage space due to the fixed size of redundant space. However, data integrity is only checked at the base station side. Guoyin et al.~\cite{Zhang2017} propose a watermarking scheme for data integrity based on fragile watermarking in order to solve the problem of resource restrictions in the perception layer of an \gls*{iot} network. They design a \gls*{prw} that calculates the embedding positions for watermarks. The watermarks are generated using the SHA-1 one-way hash function, which is then embedded to the dynamic computed position. This scheme ensures the data integrity at the sink node. It cannot verify integrity along the route of the communication. The drawback of these solutions is that they only provide an end-to-end verification, since watermark generation and verification is based on a group of data packets. Hence, we consider our approach as a better alternative to serve hop by hop data integrity verification between source and final destination of data packets.

Zhou et al.~\cite{Zhou2012} propose a secure data transmission scheme using digital watermarking technique in a \gls*{wsn}. In this scheme, the hash value of two time-adjacent sensitive data is calculated at the source node using a one-way hash function. Then, according to a digital watermarking algorithm, the sensitive data will be embedded into part of the hash sequence as watermark information. The scheme lacks any proof of concept and security analysis to check its robustness against different type of attacks. Another solution for attack detection presented in ~\cite{Alromith2018}. The method develops a \gls*{rwfs} for \gls*{iot} applications by  deploying an en-route filtering that removes injected data at early stage communication based on randomly embedding a watermark in the payload of the packet. The scheme encrypts the data packet before transmission, which encounters additional computation overhead for sensor nodes. In our proposed scheme, only some extracted data features are encrypted to form a sub-watermark that is concatenated to the data packet. To minimize energy consumption, Lalem et al.~\cite{Lalem2016} propose a distributed watermarking technique for data integrity in a \gls*{wsn} using linear interpolation for watermark embedding. The technique allows each node to check the integrity of the received data locally by extracting the watermarked data and generating a new watermark for verification. This method is based on a fixed watermark parameter for all sensor nodes in the network that can be vulnerable to many attacks.

\subsubsection{Zero-watermarking Schemes}

Zero-watermarking is a relatively new digital watermarking method. Each watermarking scheme has a different watermark generation, embedding and extraction process such as unique code (embedded in information hiding schemes), changing position of bits and hash functions  (cryptographic \linebreak schemes) \cite{hameed2018}. In zero-watermarking schemes, watermarks are generated by source node from the extraction of important data features of original data without amendment to the data of these features. Different generation functions can be applied in zero-watermarking. The generated watermarks are not embedded in the data payload, but it is invisibly integrated in the data packet and the data remain unmodified. 

Although several zero-watermarking techniques exist in the related literature, few methods are proposed to protect data integrity in \gls*{iot} environments. In~\cite{Boubiche2015}, a secure data aggregation watermarking-based scheme in homogeneous \gls*{wsn} (SDAW) is presented as a new security technique to protect data aggregation. In this mechanism, watermarks are generated using the \gls*{mac} address of sensor nodes and collected data by a one way hash function (SHA-1). The proposed scheme guarantees secure communication between the aggregation nodes and the base station. However, authors do not provide any security analysis to check the resistance of this scheme against different type of attacks. In our model, we use SHA-2 as a one-way hash function for generating a sub-watermark and provide a detailed security analysis to prove the resistance of our scheme against several attacks. To ensure trustworthiness and data integrity in an \gls*{iot} network, Hameed et al.~\cite{hameed2018} propose a zero-watermarking technique which generates and constructs a watermark in the original data before being transmitted. The generation process of the watermark is based on the original data features (data length, data occurrence frequency and data capturing time). This scheme is shown to be more computationally efficient and requires less energy compared to cryptographic techniques or reversible watermarking schemes. The proposed approach is vulnerable to modification attack, since it only uses data length, data occurrence frequency and data capturing time to generate the watermark. Hence, if data is modified by changing position of data values, the attack will not be detected. 

\subsection{Data Provenance IoT Networks}

The concept of data provenance is used in many fields of study. Each application domain defines provenance in a different way~\cite{Park2008}. In \gls*{iot} networks, we define data provenance as information about the origin of data and how it is routed by forwarding nodes from the source node towards the base station~\cite{Lim2010}. Data provenance guarantees that the data received at the final destination is trusted by the user, verifying that the data is captured by the authorized specific \gls*{iot} node at the stated time and location~\cite{Bertino2018}. Provenance can be represented as a path of nodes from source to destination as shown in Figure~\ref{fig2}. 

Several researchers used the concept of data provenance to identify the origin of data, track the ownership to serve the authenticity of data and assess trustworthiness. Aman et al.~\cite{Aman2017} propose a lightweight protocol for data provenance in \gls*{iot} networks. It uses \gls*{pufs} to identify \gls*{iot} devices and establishing physical security. Wireless links between \gls*{iot} devices and  servers are identified using the wireless channel characteristics. The proposed approach only considers single hop scenarios between \gls*{iot} devices and the server. Kamal et al.~\cite{Kamal2018} present a lightweight protocol for a multi-hop \gls*{iot} network to provide security for data and achieve data provenance. The protocol uses link fingerprints generated from the \gls*{rssi} of \gls*{iot} nodes in the network. Data provenance is achieved by attaching the encoded link fingerprint to the header of data packet at each hop. The packet header is then decoded in sequence at the server. Ragib et al.~\cite{Hasan2009} propose a provenance aware system that ensures integrity and confidentiality. The scheme implements provenance tracking of data at the application layer through encryption and incremental chained signature. Similarly, Syalim et al.~\cite{syalim2010} extend the same method to a directed acyclic graph (DAG) model of provenance database. The approach preserves integrity and confidentiality of nodes by using digital signatures and double encryption. These solutions do not consider the characteristics of sensor networks. Provenance information grows very fast, which requires transmitting large amount of provenance information with data packets (\ie increasing the bandwidth overhead).  

Sultana et al.~\cite{Sultana2011} establish a data provenance mechanism to detect malicious packet dropping attacks. The method relies on the inter-packet timing characteristics after embedding provenance information. It detects the packet loss based on the distribution of the inter-packet delays and then identifies the presence of an attack to finally localize the malicious node or link. Lim et al.~\cite{Lim2010} assess the trustworthiness of data items in a sensor network. They use data provenance and their values to compute trust scores for data items and nodes in the network. These scores are used to provide the level of trustworthiness of a data item as well as nodes. These schemes provide security and trustworthiness for sensor networks, but do not consider the retrieval of data provenance against different types of attacks. In this paper, we propose a strategy to securely transmit provenance using a tamper-proof database. In our approach, the IP address is used as the identity of a node and thus encoded in the provenance. At the gateway, after data integrity verification, the set of watermarks stored in the database is queried and the list of nodes is extracted, so that the gateway can easily and securely construct the provenance (\ie data path).

\begin{figure*}[!htbp]
\begin{center}
    \includegraphics[scale=0.46]{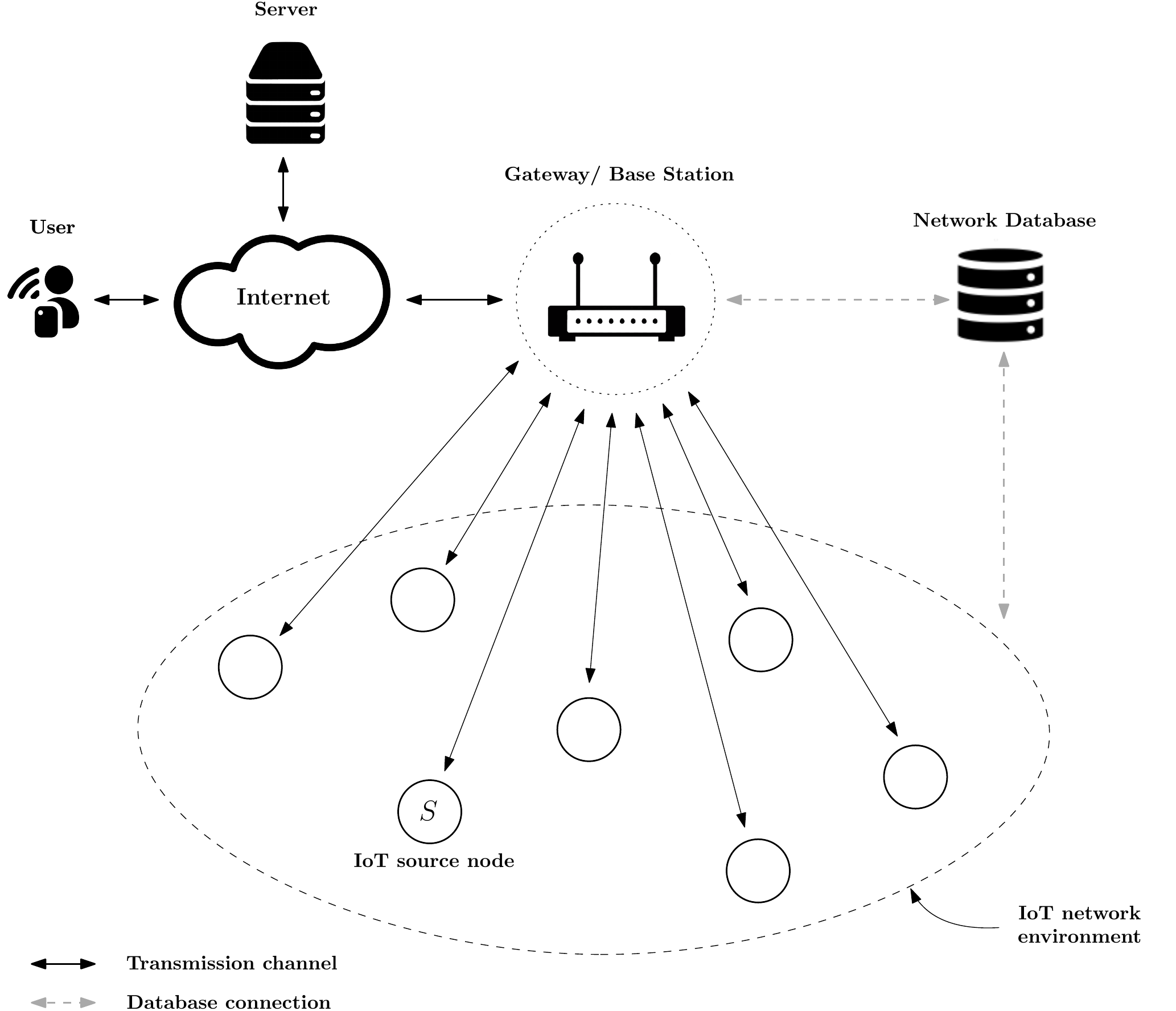}
\end{center}
\caption{Single hop network model.}
\label{fig1}
\end{figure*}


\begin{table*}[!htbp]
\centering
\caption{System Notation and Parameters}
\begin{tabular}{ p{2cm}|p{10cm}  }
\hline
\thead{Notation} & \thead{Description} \\
\hline

$d_{n,k}$ & Captured data packet $k$ from node $n$ \\
$N$ & Number of nodes in the network \\
$H$ & Number of hops the data packet $d_{n,k}$ routed through \\
$w_{ip}$ &  \gls*{iot} Device $n$ IP Address \\
$w_{t}$ & Sensed data ($d_n$) capturing time \\ 
$sw_{f_{n,k,i}}$ & Sub-watermark $i$ generated using data features of $d_{n,k}$\\
$sw_{h_{n,k,i}}$ & Sub-watermark $i$ generated using hash function for $d_{n,k}$\\
$E(sw_{f_{n,k,i}})$ & Encrypted sub-watermark $sw_{f_{n,k,i}}$\\
$W_{F_{n,k,i}}$ & Final generated watermark $i$ of data packet $k$ from node $n$\\
$d_{(n,k)W_{F_{n,k,i}}}$ & Watermarked data \\
$R(W_{F_{n,k,i}}$ & Re-generated watermark \\
$R(sw_{f_{n,k}})$ & Re-generated sub-watermark from data features\\
$R(sw_{h_{n,k,i}})$ & Re-generated sub-watermark from hash function\\
$p_{n,k,i}$ & provenance record $i$ of data packet $k$ from node $n$\\
$P_{n,k}$ & set of provenance records of data packet $k$ from node $n$\\
$||$ & Concatenation \\
$ENC()$ & Encryption function \\
$DEC()$ & Decryption function \\
$H()$ & One-way hash function \\
$K_j$ & $j^{th}$ generated and distributed secret key for encryption/decryption \\
$QRY()$ & Query function from network database \\
$STR()$ & Store function to network database \\
$I_l$ & Intermediate node $l$ \\
$S_n$ & Source node $n$ \\
$G$ & Base station or Gateway \\
$q$ & Number of bits to be deleted by an attacker \\

\hline
\end{tabular}
\label{tab:table1}
\end{table*}

\section{Proposed Construction}
\label{sec:sysmodel}
In this paper, we propose a zero-watermarking approach to verify data integrity at each hop of an \gls*{iot} environment (\ie from source node to gateway). Additionally, we ensure secure provenance of sensory data using a tamper-proof database connected to the gateway and to each node in the network. Data provenance information is used in the watermarking generation process and embedded with captured data packets to be used in the data integrity and secure provenance verification. The system is composed of the following entities:
\begin{itemize}
    \item \textbf{\gls*{iot} Source Node:} a small sensing device that collects data from surrounding environment. At each node, watermarking generation and embedding processes are applied to each data packet captured. The device performs some operations and communication procedures in the network.
    \item \textbf{\gls*{iot} Intermediate Node:} an advanced sensing device with more power and computation capabilities, responsible for forwarding data packets from source nodes to the base station. It also performs watermark generation, embedding and storing on the received data packets. Data integrity will be checked for each data packet forwarded at this stage.
    \item \textbf{Base Station or gateway:} receive the forwarded data packets for data processing. Checks data integrity and provenance recovery, and applies the attack detection procedure.
    \item \textbf{Network Database:} a secure network database which is connected to the network nodes and gateway. It stores provenance information at each hop of the data path. 
\end{itemize}

\subsection{Zero-Watermarking for Data Integrity and Provenance Network Model}
In the proposed network model, the network is assumed to consist of $N$ \gls*{iot} devices that are distributed in an \gls*{iot} network. The network is deployed in an $L \times W$ area. Devices are connected to a gateway or base-station that is the management and controller unit. Nodes are of two types: normal sensor nodes and intermediate sensor nodes. Sensory data is routed from normal source nodes to the gateway through intermediate nodes. This implies that intermediate nodes have $m$ times more energy than normal nodes (\ie energy of a normal node = $E_0$, energy of intermediate node = $E_0 + m \times E_0$). Furthermore, a tamper-proof database is connected to the gateway and to each node. It is assumed that the database cannot be compromised by the attacker. The model consists of two main scenarios: single-hop and multi-hop. In the single-hop scenario, \gls*{iot} devices transmits sensory data directly to the gateway through the transmission channel, as shown in Figure~\ref{fig1}. However, in the multi-hop scenario, sensory data is routed from the source node to the gateway through intermediate nodes as shown in Figure~\ref{fig2}.

\begin{figure*}[!htbp]
\begin{center}
    \includegraphics[scale=0.43]{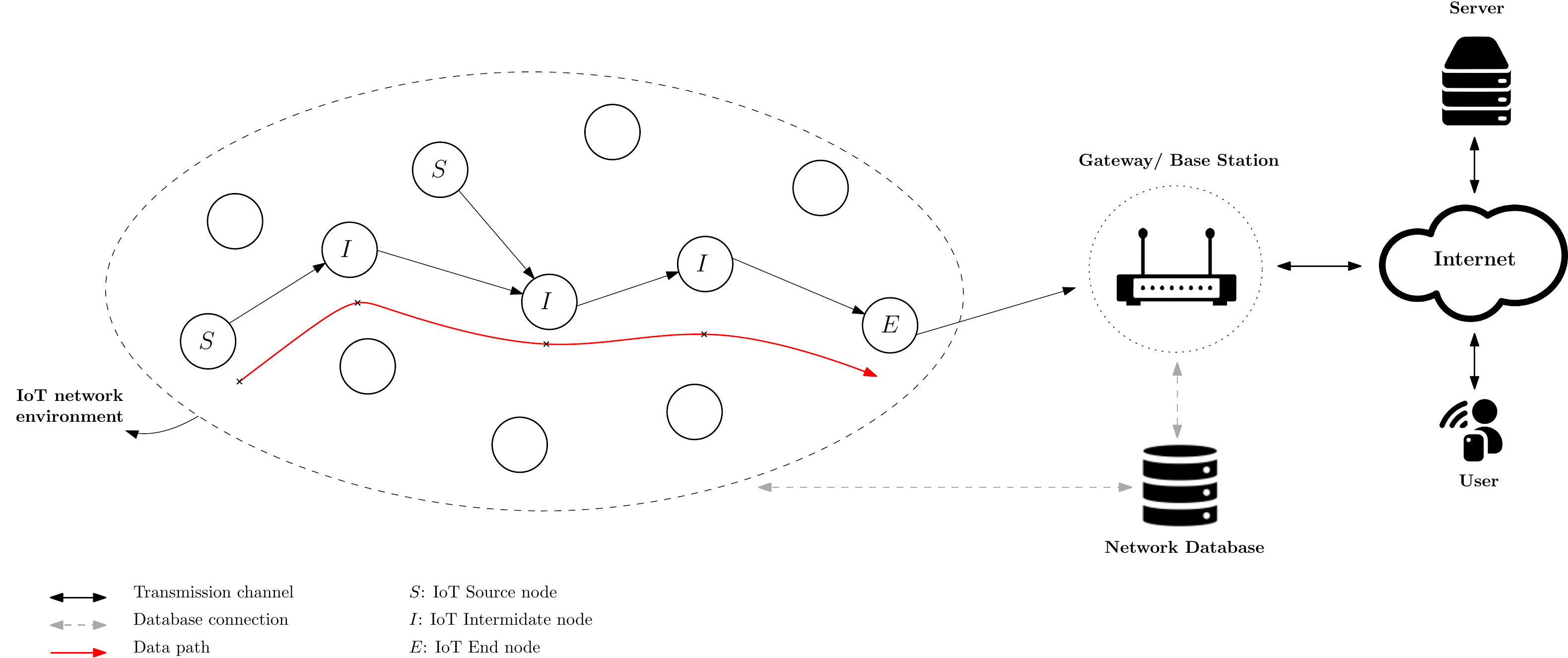}
\end{center}
\caption{Multi hop network model.}
\label{fig2}
\end{figure*}

\subsubsection{Single Hop Model}
In this scenario, the process of data integrity and secure provenance is composed of different units that form the overall system model as described in Figure~\ref{fig3}. Sensor nodes capture data from the surrounding environment and send it to the feature extraction unit. The IP address of the \gls*{iot} device and data capturing time are extracted and encrypted to generate a sub-watermark in the first sub-watermark generation unit. This sub-watermark forms the provenance record of a particular data packet. A hash function is used to generate another sub-watermark that is concatenated with the first one to generate a final watermark. The generated final watermark is then stored in a tamper-proof database. The data packet is then sent to the gateway through the transmission channel. At the gateway, data is received and forwarded to the zero-watermark re-generation unit. After the re-generation process, the stored watermark is queried from the database to be compared with the re-generated watermark in the watermark verification unit for provenance integrity check. A double verification procedure is applied for both integrity and provenance. At this stage, the gateway detects whether data and provenance is altered or not and performs either attack procedure or validates the origin of data received. 

\begin{figure*}[t]
\centerline{\includegraphics[scale=0.45]{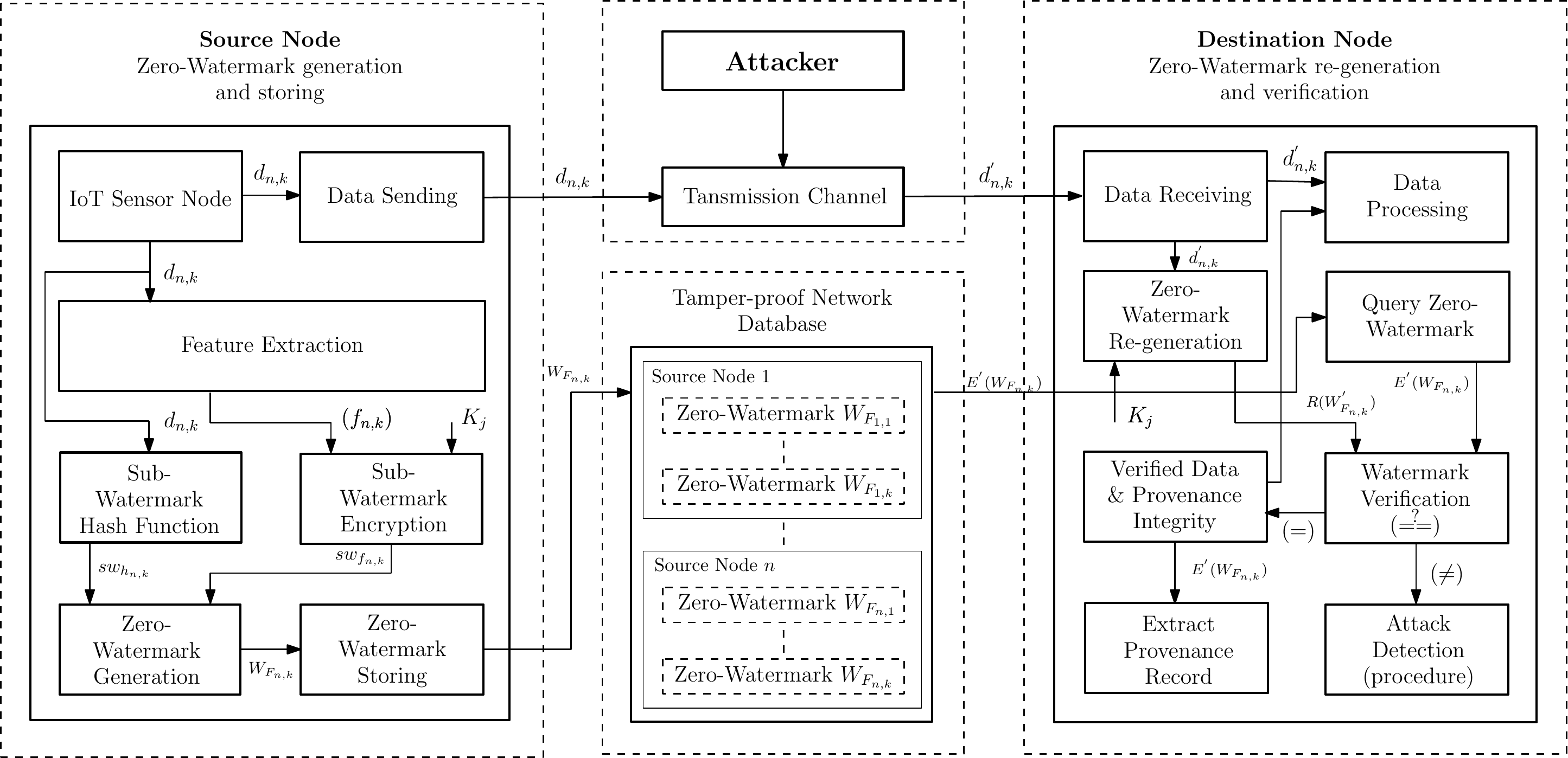}}
\caption{Zero-watermark generation, storing and verification block diagram in single hop scenario.}
\label{fig3}
\end{figure*}

\subsubsection{Multi Hop Model}
The watermark generation, embedding, extraction and verification processes of the multi-hop scenario are shown in Figure~\ref{fig4}. In this model, the data capturing time and IP address are extracted from sensed data packets to generate a sub-watermark, which is then encrypted using a secret key and concatenated with a generated hash value of data payload to construct the final watermark. The first sub-watermark or provenance record is stored in the network database and the final watermark is concatenated with the sensed data $d_{n,k}$ to be forwarded to the next intermediate node through the transmission channel. At the next hop node, the watermarked data is received. The watermark is then extracted from the received data packet. The received data is then used to re-generate a new sub-watermark that will be forwarded to the verification unit along with the extracted watermark and a queried provenance record. The intermediate node takes a decision whether an attack is detected or not. Based on this decision, it performs an attack detection procedure or generates the next-hop watermark that undergoes the same procedure of the source node (generation, embedding and storing); it uses new extracted features and provenance information. The watermarked data reaches the final destination (\ie gateway) through transmission channel. The last embedded watermark is separated from the watermarked data and a final sub-watermark is re-generated. The data integrity unit accepts extracted watermark, re-generated sub-watermark and queried provenance record as input values to check whether data or provenance is modified or not. After that, the gateway performs two procedures based on the verification result: attack detection procedure or provenance validation. 

\begin{figure*}[t]
\centerline{\includegraphics[scale=0.33]{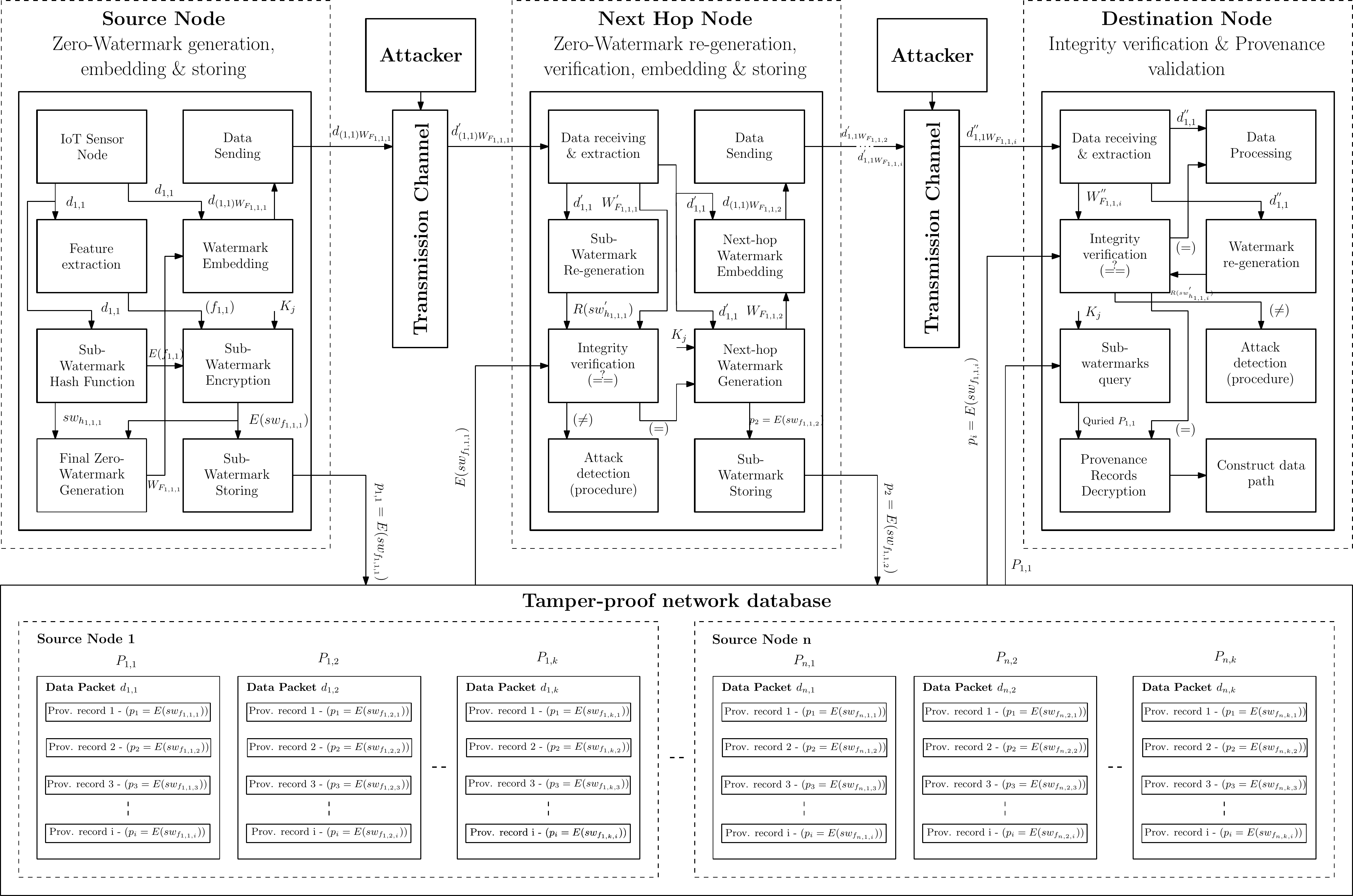}}
\caption{Zero-watermark generation, storing and verification block diagram in multi-hop scenario.}
\label{fig4}
\end{figure*}

\subsection{Security Assumptions} \label{sec:assumptions}

We consider a set of security assumptions for our proposed system as follows: 

\begin{itemize}
    \item Nodes in the network are not trusted entities, \ie, these nodes may be malicious. The protocols provided to guarantee the secure transfer of provenance information that is applied in intermediate nodes and gateway are proven to work properly in the presence of malicious nodes.
    \item The network database is a trusted and secure entity, it cannot be compromised by an external attacker to access or use its content.
    \item To allow only authorized gateways to access the network database and query provenance information, only registered gateways are authorized and the query is applied after checking the integrity of the data received from the last hop of the data path. 
    \item The database temporarily stores provenance information of a data packet at each hop from source to destination, only after being proved as trusted data. This linage can be retrieved once (by authorized gateway) and then it is removed from the database.
    \item The hashing functions used in the system are secure
    and cannot be inverted.
    \item The communication of extracted data features and provenance information (sub-watermarks) between source nodes and intermediate nodes, and between intermediate nodes and gateway, must be secure. Provenance information is encrypted using symmetric cryptography and selected data (for integrity check) is hashed using a one-way hash function.
    \item Symmetric cryptography is restricted to the encryption of short binary strings forming the extracted data feature sub-watermark. Source node, intermediate node and gateway share secret-keys to be used in different steps of the algorithms (encryption/decryption).

    \item Secret keys are changed and redistributed after a short random number of watermark generation processes. 
    \item The zero-watermarking method used to embed provenance information is transparent, fragile and secure enough for \gls*{iot} network applications.
\end{itemize}

\subsection{Threat Model} \label{sec:threat}

There is a number of different attacks that may be applied against the proposed system. Attack models to deceive and perform security breaching on different network entities require another party to obtain secret information or access network database. A threat model similar to the one used in~\cite{Cui2018} is applied in our scheme. The attacker can perform two types of attacks: passive and active attacks (\eg external attacks).
\begin{enumerate}
    \item \textbf{Passive attack:} An attacker observes secret information by passively eavesdropping data. The attacker performs an eavesdropping attack that aims to obtain data information through listening to data transmission line in the wireless medium.
    \item \textbf{Active or External attack:} A malicious attack aiming to destroy information by modifying data packets through launching different kinds of operations. An external or active adversary can launch three main attacks: 
    \begin{enumerate}
        \item \textbf{Replay attack:} Data packets are captured by an adversary and then resent in the future at a different time interval to deceive intermediate nodes or the gateway.
        \item \textbf{Integrity attack:} An attacker inserts false value(s) into the data packet at the transmission channel to deceive the gateway. Also, the attacker may delete elements of the data packet. 
        \item \textbf{Modification attack:} In this attack, data is modified by an attacker without knowledge of the data content.
        \item \textbf{Database authentication attack:} An attacker aims to detect and retrieve provenance information stored in the network database.
    \end{enumerate}
\end{enumerate}

\subsection{Proposed Algorithms}
In this section, a precise algorithmic presentation of ZIRCON to conduct the zero-watermarking scheme is described in details. 

    \begin{algorithm}[t]
    \caption{Watermark generation and storing}\label{alg1}
    \begin{algorithmic}[1]
    \Procedure{Watermark generation and storing}{}
    \State $w_{ip} \gets $ \gls*{iot} Device $n$ IP Address
    \State $w_{t} \gets $ captured data.sensing time ($d_{n,k}$)
    \State $sw_{f_{n,k}} \gets w_{ip} $ $||$ $w_{t}$
    \State $p_{n,k}$ $\gets$ $E(sw_{f_{n,k}}) \gets$ ENC($K_j$,$sw_{f_{n,k}}$) 
    \State $sw_{h_{n,k}} \gets H(d_{n,k})$ \Comment{Select first 8 bytes of hash output}
    \State $W_{F_{n,k}} \gets E(sw_{f_{n,k}})$ $ || $ $sw_{h_{n,k}}$
    \State $STR(W_{F_{n,k}})$
    \State $Send(d_{n,k})$
    \EndProcedure
    \end{algorithmic}
    \end{algorithm}

\subsubsection{Single hop scenario.}
In this scenario, two algorithms are proposed: watermark generation and storing, and watermark verification. 

\begin{enumerate}
\setlength{\itemsep}{10pt}

    \item \textbf{Watermark generation and storing:} Algorithm~\ref{alg1} shows the process of generating and storing a watermark in a single-hop scenario. It accepts sensed data from the \gls*{iot} device to produce a final watermark. The algorithm extracts the IP address and the data capturing time from the source node and combines it to generate a sub-watermark $sw_{f_{n,k}}$ as shown in Lines 2-4 of Algorithm~\ref{alg1}. The sub-watermark is then encrypted using the secret key $K_j$ to obtain a provenance record $p_{n,k} = E(sw_{f_{n,k}},K_j)$. Another sub-watermark $sw_{h_{n,k}}$ is generated from the hash value of data payload using a one-way hash function. These two generated sub-watermarks are concatenated to form a final watermark $W_{F_{n,k}}$ as:
    
    \begin{equation} \label{eq:1}
    W_{F_{n,k}} = E(w_{ip} \, || \, w_{t} \,,K_j) \, || \, H(d_{n,k})
    = E(sw_{f_{n,k}},K_j) \, || \, sw_{h_{n,k}}
    \end{equation}
    
    Here, $||$ is the concatenation operator, $W_{F_{n,k}}$ $( 1 \leq n \leq N,$ where $N$ is the number of nodes in the network$)$ is the final watermark, $H$ is a lightweight, secure hash function, and $n$ is the particular node number. The watermark generation algorithm uses the SHA-2 hash function to calculate the hash value. The advantage of using the SHA-2 hash function over other hash algorithms is that SHA-2 has a lightweight feature that uses 65\% less memory than other algorithms, such as the MD5 hash function (which has several vulnerability issues), which is needed in resource-constrained networks~\cite{Ganesan2003}. After the generation procedure, the final watermark is stored in the network database and the data packet is sent to the base station as shown in Lines 7-9.

    \begin{algorithm}[!htbp]
    \caption{Watermark verification}\label{alg2}
    \hspace*{\algorithmicindent} \textbf{input:} $d^{'}_{n,k}$ \\
    \hspace*{\algorithmicindent} \textbf{output:} verified/not verified
    \begin{algorithmic}[1]
    \Procedure{Watermark querying and verification}{}
    \State $Receive(d^{'}_{n,k})$
    \State $R(W^{'}_{n,k}) \gets $ REDO Algorithm 1 \Comment{Re-generate watermark}
    \State $R(sw^{'}_{h_{n,k}}) \gets $ $EXTRACT(R(W^{'}_{n,k}))$
    \State $W_{F_{n,k}} \gets $ $QRY(W_{F_{n,k}})$ \Comment{Query zero-watermark from database}
    \State $sw_{h_{n,k}} \gets $ $EXTRACT(W_{F_{n,k}})$
    \State \textbf{if} ($R(sw^{'}_{h_{n,k}}) = sw_{h_{n,k}}$) \textbf{then}
    \State \hspace{10pt} Data Integrity Verified 
    \State \hspace{10pt} $E(sw_{f_{n,k}}) \gets $ $EXTRACT(W_{F_{n,k}})$ 
    \State \hspace{10pt} $E(R(sw^{'}_{f_{n,k}})) \gets $ $EXTRACT(R(W^{'}_{n,k}))$
    \State \hspace{10pt} \textbf{if} ($E(R(sw^{'}_{f_{n,k}})) = E(sw_{f_{n,k}})$) \textbf{then}
    \State \hspace{20pt} Provenance Verified 
    \State \hspace{20pt} $p_{n,k} \gets $ $DEC(E(sw_{f_{n,k}}),K_j)$  \Comment{Sub-watermark decryption}
    \State \hspace{20pt} Extract \gls*{iot} device ($n$) IP address
    \State \hspace{20pt} Check origin of data packet ($d^{'}_{n,k}$)
    \State \hspace{20pt} Process data ($d^{'}_{n,k}$)
    \State \hspace{10pt} \textbf{else} 
    \State \hspace{20pt} Provenance Not Verified
    \State \hspace{20pt} Discard data ($d^{'}_{n,k}$)
    \State \hspace{20pt} Perform attack procedure
    \State \hspace{20pt} Delete $W_{F_{n,k}}$ from network database
    \State \hspace{10pt} \textbf{end if}
    \State \textbf{else} 
    \State \hspace{10pt} Integrity Not Verified
    \State \hspace{10pt} Discard data ($d^{'}_{n,k}$)
    \State \hspace{10pt} Perform attack procedure
    \State \hspace{10pt} Delete $W_{F_{n,k}}$ from network database
    \State \textbf{end if}
    \EndProcedure
    \end{algorithmic}
    \end{algorithm}
    
    \item \textbf{Watermark verification:} The process of verifying data integrity and validating data provenance in a single-hop scenario is shown in Algorithm~\ref{alg2}, which takes the received data $d^{'}_{n,k}$ as an input. Then, a re-generation procedure is performed to re-generate the watermark $R(W^{'}_{n,k})$ and the stored watermark $W_{F_{n,k}}$ is queried from the database. A comparison operation is then applied on the re-generated sub-watermark $R(sw^{'}_{h_{n,k}})$ and the queried sub-watermark $sw_{h_{n,k}}$. If the sub-watermarks are the same, data integrity is verified. Then, another comparison operation is performed that compares the re-generated sub-watermark $E(R(sw^{'}_{f_{n,k}}))$ and the second queried sub-watermark $E(sw_{f_{n,k}})$, for provenance integrity check, as shown in Line 11. If these two sub-watermarks are the same, provenance integrity is verified and the provenance record $p_{n,k}$ that contains provenance information is decrypted using the secret key $K_j$. The IP address and data capturing time are obtained from $p_{n,k}$. Provenance is then validated and data is ready for processing. After that, the stored provenance record $p_{n,k}$ of received data packet $d^{'}_{n,k}$ may be deleted from the database, after being used for security analysis. Whereas, if sub-watermarks were not the same, data $d^{'}_{n,k}$ will be discarded and an attack procedure is performed (check the type of attack or origin of data is being altered). The stored provenance $p_{n,k}$ of the discarded data, after attack detection, will be deleted from the database. 
    
\end{enumerate}

    \begin{algorithm}[!htbp]
    \caption{Watermark generation and embedding}\label{alg3}
    \hspace*{\algorithmicindent} \textbf{input:} $d_{n,k}$ \\
    \hspace*{\algorithmicindent} \textbf{output:} $d_{(n,k)W_{F_{n,k,i}}}$
    \begin{algorithmic}[1]
    \Procedure{Watermark generation and storing}{}
    \State $w_{ip} \gets $ \gls*{iot} Device ($n$) IP Address
    \State $w_{t} \gets $ captured data.sensing time ($d_{n,k}$)
    \State $sw_{f_{n,k,i}} \gets w_{ip} $ $||$ $w_{t}$
    \State $p_{n,k,i} \gets$ $E(sw_{f_{n,k,i}}) \gets$ $ENC(sw_{f_{n,k,i}},K_j)$
    \State $sw_{h_{n,k,i}} \gets H(d_{n,k})$ \Comment{Select first 8 bytes of hash output}
    \State $W_{F_{n,k,i}} \gets p_{n,k,i}$ $ || $ $sw_{h_{n,k,i}}$
    \State $STR(p_{n,k,i})$
    \EndProcedure
    \Procedure{Watermark Embedding}{}
    \State $d_{(n,k)W_{F_{n,k,i}}}$ $\gets$ $d_{n,k}$ $||$ $W_{F_{n,k,i}}$
    \State $Send(d_{(n,k)W_{F_{n,k,i}}})$
    \EndProcedure
    \end{algorithmic}
    \end{algorithm}

    \begin{algorithm}[!htbp]
    \caption{Watermark verification and re-embedding}\label{alg4}
    \hspace*{\algorithmicindent} \textbf{input:} $d^{'}_{(n,k)W_{F_{n,k,i}}}$ \\
    \hspace*{\algorithmicindent} \textbf{output:} verified/not verified, $W_{F_{n,k,i}}$, $d^{'}_{(n,k)W_{F_{n,k,i}}}$
    \begin{algorithmic}[1]
    \Procedure{Watermark verification and re-embedding}{}
    \State $Receive(d^{'}_{(n,k)W_{F_{n,k,i}}})$
    \State Extract Watermarked Data into $d^{'}_{n,k}$ and $W^{'}_{F_{n,k,i}}$
    \State $R(sw_{h_{n,k,i}}) \gets $ REDO Algorithm 1 \Comment{Re-generate sub-watermark}
    \State $sw^{'}_{h_{n,k,i}} \gets $ $EXTRACT(W^{'}_{F_{n,k,i}})$
    \State \textbf{if} ($R(sw_{h_{n,k,i}}) = sw^{'}_{h_{n,k,i}}$) \textbf{then}
    \State \hspace{10pt} Integrity Verified
    \State \hspace{10pt} $E(sw_{f_{n,k,i}}) \gets $ $QRY(p_{n,k,i})$ \Comment{Query provenance record}
    \State \hspace{10pt} $E(sw^{'}_{f_{n,k,i}}) \gets $ $EXTRACT(W^{'}_{F_{n,k,i}})$
    \State \hspace{10pt} \textbf{if} ($E(sw^{'}_{f_{n,k,i}}) = E(sw_{f_{n,k,i}})$) \textbf{then}
    \State \hspace{20pt} Provenance Integrity Verified 
    \State \hspace{20pt} \textbf{Generate next hop watermark}\{
    \State \hspace{30pt} $w_{ip} \gets $ next hop device IP Address
    \State \hspace{30pt} $w_{t} \gets $ time watermarked data packet ($d^{'}_{(n,k)W_{F_{n,k,i}}}$) received
    \State \hspace{30pt} $i++$ \Comment{update next hop watermark index}
    \State \hspace{30pt} $sw_{f_{n,k,i}} \gets w_{ip} $ $||$ $w_{t}$ 
    \State \hspace{30pt} $p_{n,k,i} \gets$ $E(sw_{f_{n,k,i}}) \gets$ $ENC(sw_{f_{n,k,i}},K_j)$
    \State \hspace{30pt} $sw_{h_{n,k,i}} \gets $ Hash value from $R(sw_{h_{n,k,i}})$
    \State \hspace{30pt} $W_{F_{n,k,i}} \gets E(sw_{f_{n,k,i}})$ $ || $ $sw_{h_{n,k,i}}$\}
    \State \hspace{20pt} $d^{'}_{(n,k)W_{F_{n,k,i}}} $ $\gets$ $d^{'}_{n,k}$ $||$ $W_{F_{n,k,i}}$ \Comment{Embedding next hop watermark}
    \State \hspace{20pt} $STR(p_{n,k,i})$ \Comment{Sub-watermark storing} 
    \State \hspace{20pt} $Send(d^{'}_{(n,k)W_{F_{n,k,i}}})$
    \State \hspace{10pt} \textbf{else} 
    \State \hspace{20pt} Provenance no verified/attack detection
    \State \hspace{20pt} Discard data $d^{'}_{n,k}$ \& Perform attack procedure
    \State \hspace{20pt} Delete $P_{n,k}$ from network database
    \State \hspace{10pt} \textbf{end if}
    \State \textbf{else} 
    \State \hspace{10pt} Not verified/attack detection
    \State \hspace{10pt} Discard data $d^{'}_{n,k}$
    \State \hspace{10pt} Perform attack procedure
    \State \hspace{10pt} Delete $P_{n,k}$ from network database
    \State \textbf{end if}
    \EndProcedure
    \end{algorithmic}
    \end{algorithm}

\subsubsection{Multi hop scenario.}
Three algorithms are proposed in this scenario: watermark generation and embedding, watermark verification and re-embedding, and data integrity verification and provenance reconstruction.

\begin{enumerate}
\setlength{\itemsep}{10pt}

    \item \textbf{Watermark generation and embedding} Algorithm~\ref{alg3} describes the working process of two procedures: watermark generation and storing, and watermark embedding in the multi-hop scenario. The algorithm accepts the captured data $d_{n,k}$ as an input obtained from the source node that is sensing data from the surrounding environment. In the first procedure, two inputs are used for generating the first sub-watermark such as the \gls*{iot} device IP address $w_{ip}$ and the data sensing time $w_{t}$. The sub-watermark $sw_{f_{n,k,i}} $ is formed by appending these values. To secure the provenance information, $sw_{f_{n,k,i}}$ is encrypted using secret key $K_j$. The encrypted value forms the provenance record $p_{n,k,i}$. Another sub-watermark $sw_{h_{n,k,i}}$ is generated from the hash value of the data payload. Finally, the final watermark $W_{F_{n,k,i}}$ is produced by concatenating the two sub-watermarks $sw_{f_{n,k,i}}$ and $sw_{h_{n,k,i}}$ as in Equation~\eqref{eq:2}. Provenance record $p_{n,k,i}$ is then stored in the network database as shown in Line 8. In the second procedure, the watermarked data $d_{(n,k)W_{F_{n,k,i}}}$ is produced by concatenating the final watermark $W_{F_{n,k,i}}$ with the captured data packet $d_{n,k}$ as shown in Equation~\eqref{eq:3}.
    
    \begin{equation} \label{eq:2}
    W_{F_{n,k,i}} = E(w_{ip} \, || \, w_{t} \,,K_j) \, || \, H(d_{n,k})
    = E(sw_{f_{n,k,i}},K_j) \, || \, sw_{h_{n,k,i}}
    \end{equation}
    
    \begin{equation} \label{eq:3}
    d_{(n,k)W_{F_{n,k,i}}} = d_{n,k} \, || \, W_{F_{n,k,i}}
    \end{equation}

    \begin{algorithm}[!htbp]
    \caption{Data integrity verification and provenance reconstruction}\label{alg5}
    \hspace*{\algorithmicindent} \textbf{input:} $d^{''}_{(n,k)W_{F_{n,k,i}}}$ \\
    \hspace*{\algorithmicindent} \textbf{output:} verified/not verified, provenance construction
    \begin{algorithmic}[1]
    \Procedure{Watermark restoring and verification}{}
    \State $Receive(d^{''}_{(n,k)W_{F_{n,k,i}}})$
    \State Extract Watermarked Data into $d^{''}_{n,k}$ and $W^{''}_{F_{n,k,i}}$
    \State $R(sw_{h_{n,k,i}}) \gets $ REDO Algorithm 3 \Comment{Only sub-watermark generation procedure}
    \State $sw^{''}_{h_{n,k,i}} \gets $ $EXTRACT(W^{''}_{F_{n,k,i}})$
    \State \textbf{if} ($R(sw_{h_{n,k,i}})  =  sw^{''}_{h_{n,k,i}}$) \textbf{then}
    \State \hspace{10pt} Data integrity verified
    \State \hspace{10pt} $E(sw_{f_{n,k,i}}) \gets $ $QRY(p_{n,k,i})$ \Comment{Query last provenance record}
    \State \hspace{10pt} $E(sw^{''}_{f_{n,k,i}}) \gets $ $EXTRACT(W^{''}_{F_{n,k,i}})$
    \State \hspace{10pt} \textbf{if} ($E(sw_{f_{n,k,i}})  =  E(sw^{''}_{f_{n,k,i}})$) \textbf{then}
    \State \hspace{20pt} Provenance integrity verified
    \State \hspace{20pt} $P_{n,k} \gets QRY(P_{n,k})$  \Comment{Query stored set of provenance records}
    \State \hspace{20pt} \textbf{for} (index $i$, $1 \le i \le H$, $i++$) \textbf{do}
    \State \hspace{30pt} Extract $E(sw_{f_{n,k,i}})$ of each $p_{n,k,i}$ from $P_{n,k}$
    \State \hspace{30pt} $sw_{f_{n,k,i}} = DEC(E(sw_{f_{n,k,i}}),K_j)$ 
    \State \hspace{30pt} Extract provenance information from each sub-watermark
    \State \hspace{20pt} \textbf{end for}
    \State \hspace{20pt} Construct data path of $d^{''}_{n,k}$
    \State \hspace{10pt} \textbf{else} 
    \State \hspace{20pt} Provenance integrity is not verified/ attack detected
    \State \hspace{20pt} Discard data $d^{''}_{n,k}$
    \State \hspace{20pt} Perform attack procedure
    \State \hspace{20pt} Delete $P_{n,k}$ from network database 
    \State \hspace{10pt} \textbf{end if}
    \State \textbf{else} 
    \State \hspace{10pt} Data integrity not verified/ attack detected
    \State \hspace{10pt} Discard data $d^{''}_{n,k}$
    \State \hspace{10pt} Perform attack procedure
    \State \hspace{10pt} Delete $P_{n,k}$ from network database
    \State \textbf{end if}
    \EndProcedure
    \end{algorithmic}
    \end{algorithm}

    \item \textbf{Watermark verification and re-embedding:} At the next hop, a watermark verification and re-embedding algorithm is applied as shown in Algorithm~\ref{alg4}. To verify data integrity at the next node, the algorithm accepts the watermarked data $d^{'}_{(n,k)W_{F_{n,k,i}}}$ as an input. The captured data $d^{'}_{n,k}$ and watermark $W^{'}_{F_{n,k,i}}$ are extracted from $d^{'}_{(n,k)W_{F_{n,k,i}}}$. A new sub-watermark $R(sw_{h_{n,k,i}})$ is re-generated from $d^{'}_{n,k}$ by using the first procedure of Algorithm~\ref{alg3} and $sw^{'}_{h_{n,k,i}}$ is extracted from $W^{'}_{F_{n,k,i}}$. Then a comparison operation is applied on the sub-watermark values of $R(sw_{h_{n,k,i}})$ and $sw^{'}_{h_{n,k,i}}$ to check whether data is altered or not. If data integrity is verified, $E(sw_{f_{n,k,i}})$ is obtained from querying the provenance record $p_{n,k,i}$ from the network database. Another sub-watermark $E(sw^{'}_{f_{n,k,i}})$ is extracted from $W^{'}_{F_{n,k,i}}$ for provenance validation. Then, a comparison operation is applied on $E(sw_{f_{n,k,i}})$ and $E(sw^{'}_{f_{n,k,i}})$. If both sub-watermarks are the same, provenance integrity is verified and a new watermark is generated using the same procedure of Algorithm~\ref{alg3} as shown in Lines 12-19. The new generated watermark $W_{F_{n,k,i}}$ is formed of the next hop node IP address and the watermarked data packet receiving time, and the same hash value of the data packet obtained from the re-generated sub-watermark $R(sw_{h_{n,k,i}})$ using Equation~\eqref{eq:2}. The watermark $W_{F_{n,k,i}}$ is concatenated with data $d^{'}_{n,k}$ to form a watermarked data packet as shown using Equation~\eqref{eq:3}. Then, the new generated sub-watermark $E(sw_{f_{n,k,i}})$ or provenance record $p_{n,k,i}$ is stored in the network database as shown in Line 20. However, if $E(sw_{f_{n,k,i}})$ and $E(sw^{'}_{f_{n,k,i}})$ are not the same, the provenance is not verified and the data is discarded. Also, $P_{n,k}$ of received watermarked data packet $d^{'}_{(n,k)W_{F_{n,k,i}}}$ is deleted from the database and an attack procedure is applied. If data integrity is not verified, data will be also discarded and an attack procedure will be applied. Also, all stored provenance records of $P_{n,k}$ related to this data packet will be deleted from the database.

    \item \textbf{Data integrity verification and provenance reconstruction:} The process of data integrity verification and provenance reconstruction at the gateway is shown in Algorithm~\ref{alg5}. The verification procedure relies on four main conditions:
    
    \vspace{0.1cm}
    
    \begin{enumerate}
    \setlength{\itemsep}{5pt}
        \item The origin of data packet based on the source IP $address$.
        \item The freshness of the timestamp $w_{t}$ included in the watermark.
        \item The hop by hop integrity and provenance validation.
        \item Verifying the data measurement using the hash value.
    \end{enumerate}
    
    \vspace{0.1cm}
    
    The received watermarked data $d^{''}_{(n,k)W_{F_{n,k,i}}}$ is extracted into $d^{''}_{n,k}$ and $W^{''}_{F_{n,k,i}}$. The gateway re-generates the sub-watermark $R(sw_{h_{n,k,i}})$ by performing the generation process of Algorithm~\ref{alg3} as shown in Line 4 and $sw^{''}_{h_{n,k,i}}$ is extracted from $W^{''}_{F_{n,k,i}}$. The extracted sub-watermark $sw^{''}_{h_{n,k,i}}$ and the re-generated sub-watermark $R(sw_{h_{n,k,i}})$ will be compared using a comparison operation to check data integrity. If data is not altered, the gateway queries the last provenance record $p_{n,k,i}$ of the received watermarked data packet $d^{''}_{(n,k)W_{F_{n,k,i}}}$ from the database. Then, $E(sw^{''}_{f_{n,k,i}})$ is extracted from $W^{''}_{F_{n,k,i}}$. The gateway performs a comparison operation for $E(sw^{''}_{f_{n,k,i}})$ and $E(sw_{f_{n,k,i}})$ (i.e. last stored provenance record). If both values are the same, provenance is verified, the gateway queries the set of stored provenance records $P_{n,k}$ from the database and extracts the encrypted sub-watermarks $E(sw_{f_{n,k,i}})$ of each $p_{n,k,i}$. At Line 15, the secret key $K_j$ is used to decrypt $E(sw_{f_{n,k,i}})$ and obtain the sub-watermarks $sw_{f_{n,k,i}}$ containing provenance information of the received data packet. The gateway constructs the data path from provenance information obtained. If data integrity or provenance is not verified, data will be discarded and an attack procedure is performed and $P_{n,k}$ of received watermarked data packet $d^{''}_{(n,k)W_{F_{n,k,i}}}$ is deleted from the database.
    \end{enumerate}

\subsection{Managing Internal Datagrams}\label{sec:internal}

In this section, we propose the idea of labeling IP datagrams that are used internally for network management. These datagrams should not be analyzed by the \gls*{ids} and will undergo an internal security procedure. This optimizes the scheme by minimizing the number of \gls*{ids} operations on data packets. The advantage of this protocol is the use of the Identification field, flags and fragment offset as the embedding positions in the IP datagram header which will appear random-like and will not show an evident pattern that an attacker may try to exploit (cf. Section~\ref{sec:threat}). The algorithm for managing IP datagrams by network nodes is as follows: \\

\begin{center}
    \begin{algorithm}[!htbp]
    \caption{(Internal managing algorithm at source node):}\label{alg6}
    \hspace*{\algorithmicindent} \textbf{input:}  IP datagram $d_{IP}$ \\
    \hspace*{\algorithmicindent} \textbf{output:}  Embedding hash value
    \begin{algorithmic}[1]
    \Procedure{Internal managing embedding process:}{}
    \State \textbf{if}($d_{IP}$ = internal managing packet) \textbf{then}
    \State \hspace{10pt} Compute H(Destination IP $||$ First 20 bytes of $d_{IP}$)
    \State \hspace{10pt} $d_{IP}$(header) $\gets$ H(Destination IP $||$ First 20 bytes of $d_{IP}$)
    \State \textbf{else} 
    \State \hspace{10pt} perform watermark generation and embedding procedure
    \State \textbf{end if}
    \EndProcedure
    \end{algorithmic}
    \end{algorithm}
\end{center}

    \begin{algorithm*}[t]
    \caption{(Internal managing algorithm at destination node):}\label{alg7}
    \hspace*{\algorithmicindent} \textbf{input:}  IP datagram $d_{IP}$ \\
    \hspace*{\algorithmicindent} \textbf{output:}  Require IDS/internal-managing
    \begin{algorithmic}[1]
    \Procedure{Internal managing process:}{}
    \State Receive ($d_{IP}$)
    \State \textbf{if}(IP datagram $\gets$  ($IP_{src.addr}$,$IP_{dest.addr}$)) \textbf{then}
    \State \hspace{10pt} \textbf{if}(($IP_{src.addr}$, $IP_{dest.addr}$ = internal) \& L($R_{D}$) = L($D$)) \textbf{then}
    \State \hspace{20pt} Compute H(Destination IP $||$ First 20 bytes of $d_{IP}$)
    \State \hspace{20pt} Extracted H $\gets$ $d_{IP}$(Identification+Flags+Offset)
    \State \hspace{20pt} \textbf{if}(Computed H = Extracted H) \textbf{then}
    \State \hspace{40pt} $d_{IP}$ is authenticated as ``internal-managing data packet"
    \State \hspace{40pt} $d_{IP}$ is is not examined by \gls*{ids}
    \State \hspace{20pt} \textbf{else}
    \State \hspace{40pt} attack detection
    \State \hspace{40pt} $d_{IP}$ is discarded
    \State \hspace{20pt} \textbf{end if}
    \State \hspace{10pt} \textbf{else}
    \State \hspace{20pt} $d_{IP}$ must be examined by the \gls*{ids}
    \State \hspace{10pt} \textbf{end if}
    \State \textbf{else} 
    \State \hspace{10pt} $d_{IP}$ must be examined by the \gls*{ids}
    \State \textbf{end if}
    \EndProcedure
    \end{algorithmic}
    \end{algorithm*}

\begin{figure*}[t]
\begin{center}
    \includegraphics[scale=0.33]{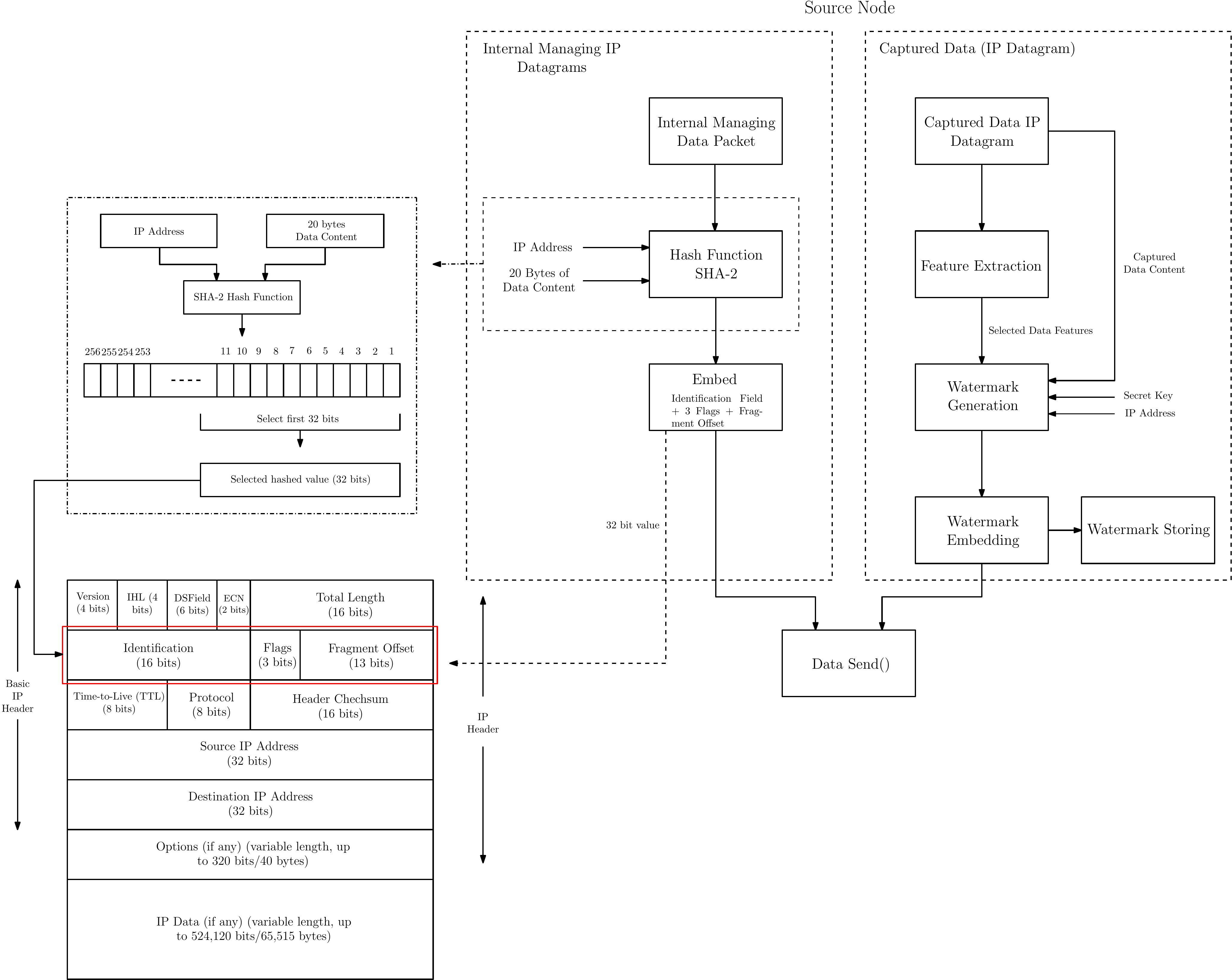}
\end{center}
\caption{Embedding hash value in Internal Managing Protocol.}
\label{fig5}
\end{figure*}

At each node or gateway, internal managing packets are labeled with a hash value that is computed and embedded before the packet is sent. The hash value is computed as H(Destination IP $||$ First 20 bytes of the datagram content), where ``$||$" concatenation. The value is then embedded in the IP datagram header as shown in Algorithm~\ref{alg6} and Figure~\ref{fig5}. We use the Identification field (16 bits), Flags(3 bits) and Fragment offset(13 bits) to embed the selected 32 bit from the hash value. After receiving any IP Datagram at the Gateway or any node in the internal network, an internal managing protocol is performed (before any \gls*{ids} procedure) as shown in Algorithm~\ref{alg7}. The datagram is subjected to a first condition that checks whether these datagrams have both a source and a destination address in our local network, since this is a first condition (filter). Then it checks if both the source and the destination address are internal and the size of the received data packet is equal to an internal managing packet size. Sensed data packets by sensor nodes are watermarked and have different size (data packet + watermark) as shown in~\ref{eq:3}. If it is not the case, the datagram must be examined by the \gls*{ids} (it is not an internal data packet). However, if both IP addresses are internal and the size is confirmed, the device computes $H$(Destination IP address $||$ First 20 bytes of the datagram content) and extracts the hash value embedded in the header of the data packet. Then the node compares these two hash values. If these values are the same, the datagram is authenticated as ``authorized internal-managing packet". Otherwise, an attack is detected and the datagram is discarded.

The hash function used in obtaining the IP datagram label is SHA-2. SHA-2 takes an input of any size and produces a 256-bit hash value. Since the Identification field is 16 bit long and the size of Flags and offset take another 16 bits, making a total of 32 bits, the device selects 32 \gls*{lsb} bits from the hash value as shown in Figure~\ref{fig5}. We can also randomize the selection of these 32 bits by using pseudo-random number generator and obtain randomized bit positions that can be selected. This randomization would add another level of security for the system.

\section{Security Analysis}\label{sec:security}
The \gls*{iot} network can be subject to two main security breaches in the transmission phase: passive and active attacks on both data and watermark. An adversary can launch various attacks based on the threat model described in Section~\ref{sec:threat}. In this section, we provide an analysis for the security of our proposed scheme against the attacks detailed in Section~\ref{sec:threat}. We assume that the network gateway and database are trusted and cannot be compromised by an attacker. 

\begin{theorem}
An unauthorized party cannot access or obtain the secret information generated by the source node $S_n$.
\end{theorem}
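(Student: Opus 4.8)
The plan is to proceed by an exhaustive case analysis over the two adversary classes of the threat model (Section~\ref{sec:threat}), reducing in each branch the ability to recover the secret information to breaking a primitive that the security assumptions (Section~\ref{sec:assumptions}) postulate to be unbreakable. First I would fix precisely what counts as \emph{secret information generated by} $S_n$: the feature sub-watermark $sw_{f_{n,k}} = w_{ip} \,||\, w_{t}$ (hence the node IP address and the capture timestamp), its encrypted form $p_{n,k} = E(sw_{f_{n,k}},K_j)$ stored as a provenance record, the hash sub-watermark $sw_{h_{n,k}} = H(d_{n,k})$, the assembled watermark $W_{F_{n,k}}$, and the secret key $K_j$ itself. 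The claim to establish is that an external party holding none of the pre-distributed keys learns nothing about these quantities beyond what is publicly derivable.

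Second, I would enumerate the passive adversary's view. By Algorithm~\ref{alg1} the single-hop transmission carries only $d_{n,k}$; by Equations~\eqref{eq:2}--\eqref{eq:3} the multi-hop transmission carries $d_{n,k} \,||\, W_{F_{n,k,i}}$ with $W_{F_{n,k,i}} = E(sw_{f_{n,k,i}},K_j) \,||\, sw_{h_{n,k,i}}$. The argument then splits the watermark: (i) the trailing component $sw_{h_{n,k,i}} = H(d_{n,k})$ is a deterministic function of a quantity already in the adversary's view, so it leaks nothing new, and by the irreversibility assumption on $H$ it cannot be used to reconstruct any withheld input; (ii) the leading component is a ciphertext of a short binary string under the symmetric scheme $ENC$, so extracting $sw_{f_{n,k,i}}$ — and therefore the provenance fields — would contradict the confidentiality of that scheme against an adversary without $K_j$. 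This isolates the whole question onto the secrecy of $K_j$, which I would dispatch by noting that $K_j$ never appears on the channel: it is pre-shared among source, intermediate nodes and gateway, and is re-keyed after a random number of watermark generations, so the ciphertext material observable under any one key is bounded and offers no key-recovery handle to an eavesdropper.

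Third, for the active/external adversary I would walk through the replay, integrity, modification and database-authentication attacks and show each either (a) leaves the adversary's read access to the above quantities unchanged — tampering with or injecting $d_{n,k}$ or $W_{F_{n,k,i}}$ confers no decryption capability — or (b) targets the network database, which by assumption is tamper-proof and answers $QRY(\cdot)$ only to registered gateways after an integrity check, so the stored $p_{n,k,i}$ remain unreachable. Hence in every branch the adversary's gain about the secret information is null, which is the theorem.

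The main obstacle I anticipate is that the statement is qualitative, so the proof is only as strong as the assumptions it rests on: its real content reduces to ``$ENC$ is a secure symmetric cipher, $H$ is one-way, keys are distributed out of band and periodically refreshed, and the database is incorruptible.'' The delicate point worth spelling out explicitly is the \emph{joint}-leakage question — whether observing $E(sw_{f_{n,k,i}},K_j)$ alongside the cleartext $d_{n,k}$, and across many packets that share a key, reveals more than observing the ciphertext in isolation. I would address this by arguing that $sw_{f_{n,k,i}}$ (an IP/timestamp pair) is independent of the payload $d_{n,k}$, so no known-plaintext correlation arises, and that the random key-refresh policy caps the number of ciphertexts produced under any single $K_j$, so no frequency-analysis advantage accumulates over time.
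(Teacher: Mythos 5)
Your proposal follows essentially the same route as the paper's proof: decompose $W_{F_{n,k,i}}$ into the encrypted feature sub-watermark and the hashed sub-watermark, reduce the confidentiality of the former to the secrecy of $K_j$ under the symmetric scheme (with the key-refresh assumption), and handle the latter via the one-wayness of $H$, closing the remaining branches with the tamper-proof-database assumption. The one place where you genuinely diverge --- and improve on the paper --- is the treatment of $sw_{h_{n,k,i}} = H(d_{n,k})$: the paper argues the attacker would need on the order of $2^{128}$ operations to ``find'' this sub-watermark, which is not the right claim, since in the multi-hop scenario $d_{n,k}$ travels in the clear alongside $W_{F_{n,k,i}}$ and any eavesdropper can simply recompute $H(d_{n,k})$; your framing (the hash component leaks nothing \emph{beyond} the already-observable payload, and one-wayness/collision resistance only matter for preventing inversion and forgery) is the accurate version of that step. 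Your explicit discussion of joint leakage across packets sharing a key, and of what the active-attack branches do or do not confer, is also more careful than the paper's proof, which confines itself to the passive case; none of this changes the conclusion, but it does make the reduction to the stated assumptions cleaner.
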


\begin{proof}
The source node $S_n$ generates a final watermark $W_{F_{n,k,i}}$ by concatenating two sub-watermarks $sw_{f_{n,k,i}}$ and $sw_{h_{n,k,i}}$. The first sub-watermark $sw_{f_{n,k,i}}$ is obtained from extracted data features as follows: IP address $w_{ip}$ and sensed data capturing time $w_{t}$. These data features are encrypted using \gls*{aes} algorithm using a symmetric secret key $K_j$. $sw_{f_{n,k,i}}$ is generated and encrypted as $E(sw_{f_{n,k,i}}) = ENC(sw_{f_{n,k,i}},K_j)$. Thus, an attacker being unaware of $K_j$ cannot decrypt $sw_{f_{n,k,i}}$ (only authorized parties are aware of $K_j$, \ie intermediate nodes and gateway). Note that $K_j$ is changed and redistributed after a short random number of watermark generation sessions (see Section~\ref{sec:threat}). For the second sub-watermark $sw_{h_{n,k,i}}$, a source node uses a one-way cryptographic hash function $H()$ to obtain $sw_{h_{n,k,i}}$ used for data integrity check. It is computationally infeasible to to find a pair $(x,y)$ such that $h(x) = h(y)$, which make the function secure and cannot be inverted as assumed in Section~\ref{sec:threat}. Additionally, we use SHA-2 hash function in our scheme with 256 bit hash value, which make it computationally infeasible for an attacker to carry out $2^{128}$ calculations to find the second sub-watermark. The generated watermark is computed as $W_{F_{n,k,i}} = E(sw_{f_{n,k,i}})$ $ || $ $sw_{h_{n,k,i}}$ for each captured data. Thus, an adversary cannot access watermark information generated by source nodes.      
\end{proof}

\begin{theorem}
An attacker, cannot successfully deceive an intermediate node $I_l$ or gateway $G$ by inserting fake data or deleting data from the data flow generated by a legitimate node $S_n$ and transmitted to $I_l$ or $G$.
\end{theorem}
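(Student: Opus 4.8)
The plan is to show that any insertion or deletion attack on the data flow $d_{n,k}$ is detected by the integrity-check condition in the verification algorithms (Algorithm~\ref{alg2} for single-hop, Algorithms~\ref{alg4} and~\ref{alg5} for multi-hop), because such tampering invalidates the hash sub-watermark $sw_{h_{n,k,i}} = H(d_{n,k})$. First I would fix the adversary's capabilities as in the threat model of Section~\ref{sec:threat}: the attacker sits on the wireless channel, can intercept the transmitted packet $d_{(n,k)W_{F_{n,k,i}}} = d_{n,k} \, || \, W_{F_{n,k,i}}$, and wishes to forward a modified payload $\tilde d_{n,k} \neq d_{n,k}$ obtained by inserting fake values or deleting $q$ bits, while avoiding detection at the next hop $I_l$ or at $G$. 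Detection is avoided only if the receiver's re-generated sub-watermark $R(sw_{h_{n,k,i}}) = H(\tilde d_{n,k})$ equals the sub-watermark the receiver compares against.

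The core of the argument splits into two cases according to what the receiver compares against. In the multi-hop case the receiver extracts $sw'_{h_{n,k,i}}$ from the (possibly tampered) watermark field $W'_{F_{n,k,i}}$ travelling with the packet; in the single-hop case the receiver uses $sw_{h_{n,k,i}}$ queried from the tamper-proof database. For the database-backed comparison, evading detection requires $H(\tilde d_{n,k}) = H(d_{n,k})$ with $\tilde d_{n,k} \neq d_{n,k}$, i.e.\ a second preimage / collision for SHA-2, which is computationally infeasible by the assumption in Section~\ref{sec:threat} and the $2^{128}$ security bound invoked in Theorem~1. For the in-band comparison, the attacker could instead try to replace $W'_{F_{n,k,i}}$ by a forged watermark containing $H(\tilde d_{n,k})$; but to pass the subsequent provenance check (Line~11 of Algorithm~\ref{alg4}, Line~10 of Algorithm~\ref{alg5}) the forged watermark must also carry an encrypted feature sub-watermark $E(sw_{f_{n,k,i}})$ matching the value independently stored in the database, and by Theorem~1 the attacker, ignorant of $K_j$, cannot produce a valid $E(sw_{f_{n,k,i}})$ for a record it did not legitimately observe. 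Hence the forged-watermark route is also blocked, and in every branch the verification algorithm reaches the ``Integrity not verified / attack detection'' path, discards $d'_{n,k}$, and deletes $P_{n,k}$.

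I would then separately dispatch the pure deletion variant ``the attacker deletes $q$ bits of the packet'': deleting bits from the payload changes $d_{n,k}$ and is caught exactly as above; deleting bits from $W_{F_{n,k,i}}$ itself changes the length of the watermark field, so the receiver's fixed-length extraction of $sw_{h_{n,k,i}}$ and $E(sw_{f_{n,k,i}})$ is misaligned and the comparisons fail with overwhelming probability, again triggering detection. The main obstacle, and the place where the argument needs the most care, is the in-band case: one must argue that the attacker cannot simultaneously (i) adjust the payload and (ii) rewrite the accompanying watermark so that \emph{both} the hash check and the provenance check pass — this is precisely where the independent database copy of $E(sw_{f_{n,k,i}})$ and the secrecy of $K_j$ (Theorem~1) are essential, and I would state explicitly that without the tamper-proof database a man-in-the-middle who simply recomputes the whole watermark on $\tilde d_{n,k}$ would otherwise go undetected. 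A minor secondary point is to quantify ``cannot successfully'' as ``except with negligible probability'', inherited from the collision-resistance of $H$ and the security of the \gls*{aes} encryption.
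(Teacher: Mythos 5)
Your overall structure mirrors the paper's proof: a case split between insertion and deletion, and between tampering with the payload $d_{n,k}$ and tampering with the watermark field $W_{F_{n,k,i}}$, with detection resting on the collision resistance of SHA-2 and on the tamper-proof database. You go further than the paper in two useful respects: you separate the database-backed comparison (single hop, Algorithm 2, where the whole $W_{F_{n,k}}$ including $sw_{h_{n,k}}$ is stored and queried) from the in-band comparison (Algorithms 4 and 5, where $sw'_{h_{n,k,i}}$ is read off the travelling watermark), and you explicitly raise the combined attack in which the adversary rewrites the payload \emph{and} the accompanying hash consistently. The paper argues only the two ``pure'' cases --- alter $d_{n,k}$ but not $W_{F_{n,k,i}}$, alter $W_{F_{n,k,i}}$ but not $d_{n,k}$ --- and covers the combined case with a blanket appeal to the ``assumption of secure communication of extracted data features.''

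However, your resolution of the in-band case contains a genuine error. You claim the forged-watermark route is blocked because the attacker ``cannot produce a valid $E(sw_{f_{n,k,i}})$ for a record it did not legitimately observe.'' But the attacker \emph{did} observe it: $E(sw_{f_{n,k,i}})$ travels inside the intercepted packet $d_{(n,k)W_{F_{n,k,i}}} = d_{n,k}\,||\,W_{F_{n,k,i}}$ as an opaque ciphertext, and the adversary needs neither $K_j$ nor any decryption to copy it verbatim into a forged watermark $E(sw_{f_{n,k,i}})\,||\,H(\tilde d_{n,k})$ attached to a modified payload $\tilde d_{n,k}$. Since Algorithm 3 stores only $p_{n,k,i} = E(sw_{f_{n,k,i}})$ in the database --- not $sw_{h_{n,k,i}}$ --- the receiver in Algorithm 4 has no independent trusted copy of the hash: the integrity check compares $H(\tilde d_{n,k})$ against the in-band $H(\tilde d_{n,k})$ and passes, and the provenance check compares the replayed ciphertext against the stored one and also passes. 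Your argument therefore does not close the gap you yourself identified (and, to be fair, neither does the paper's proof, which does not acknowledge it). To make the claim go through in the multi-hop scenario you would need either a keyed hash (e.g., an HMAC under $K_j$) for $sw_{h_{n,k,i}}$, or storage of $sw_{h_{n,k,i}}$ in the database as in the single-hop Algorithm 1; your proof should invoke such a mechanism explicitly or restrict the conclusion to the database-backed setting.
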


\begin{proof}
In case an attacker inserts fake data into a watermarked data-packet $d_{(n,k)W_{F_{n,k,i}}}$ being transmitted to $I_l$ or $G$, the destination node extracts $d_{(n,k)W_{F_{n,k,i}}}$ into sensed data $d_{n,k}$ and watermark $W_{F_{n,k,i}}$. Then, a re-generated sub-watermark $R(sw_{h_{n,k,i}})$ is computed from the received captured data $d_{n,k}$ and compared to the extracted watermark $sw^{'}_{h_{n,k,i}}$ from $W_{F_{n,k,i}}$. The process of re-generation is based on the previously mentioned generation process (\ie SHA-2 hash function for $sw_{h_{n,k,i}}$). Hence, any change in $d_{n,k}$ content produces an altered re-generated sub-watermark. The assumption of secure communication of extracted data features and provenance information using symmetric cryptography and a one-way hash function applies (Section~\ref{sec:threat}). Then, even if an attacker inserts fake data into $W_{F_{n,k,i}}$ without altering $d_{n,k}$, $R(sw_{h_{n,k,i}})$ will not match $sw^{'}_{h_{n,k,i}}$ in the comparison process. Also, if the attacker inserts fake data to the second sub-watermark $E(sw^{'}_{f_{n,k,i}})$ the next hop intermediate node or gateway queries the stored provenance record $p_{n,k,i} = E(sw_{f_{n,k,i}}$ from the data base and compares it with the extracted sub-watermark $E(sw^{'}_{f_{n,k,i}})$. Any change in $E(sw^{'}_{f_{n,k,i}})$ yields to alternation in the provenance information.

In the second case, the attacker aims to delete data content from $d_{n,k}$ or $W_{F_{n,k,i}}$, or drop an entire data-packet $d_{(n,k)W_{F_{n,k,i}}}$ being routed from $S_n$ to $I_l$ or from $I_l$ to $G$. The deletion of $q$ bits from $d_{n,k}$ results in the modification of $R(sw_{h_{n,k,i}})$ and thus $sw^{'}_{h_{n,k,i}}$ will not match $R(sw_{h_{n,k,i}})$. Again, the previously mentioned assumption of secure communication of $W_{F_{n,k,i}}$ applies. Furthermore, if the attacker deletes $q$ bits from $W_{F_{n,k,i}}$ it will be detected in the comparison process of the two sub-watermarks $R(sw_{h_{n,k,i}})$ and $sw^{'}_{h_{n,k,i}}$ or between the queried sub-watermark $E(sw_{f_{n,k,i}})$ and the extracted one $E(sw^{'}_{f_{n,k,i}})$. Obviously, such an adversary may drop $d_{(n,k)W_{F_{n,k,i}}}$ routed through $I_l$. This attack can be detected at $G$ by accessing the tamper-proof database and querying the provenance records of $d_{n,k}$, and detecting where the packet drop attack occurred. The database stores provenance records securely, which cannot be accessed by an attacker (as described in Section~\ref{sec:threat}).  
\end{proof}

\begin{theorem}
An attacker, attempting to alter provenance information: $(i)$ cannot add legitimate nodes to the provenance of data generated by an unauthorized node, $(ii)$ cannot successfully add or remove nodes from the provenance of data generated by legitimate nodes.
\end{theorem}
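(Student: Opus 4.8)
The plan is to reduce both parts of the statement to the guarantees already established in Theorems 1 and 2, together with the tamper-proofness of the network database and the freshness of the timestamps. First I would fix notation: let $S_n$ be the (possibly unauthorized) originating node, let the claimed data path be $I_1, I_2, \dots, I_H, G$, and recall that at each hop $i$ the provenance record $p_{n,k,i} = E(sw_{f_{n,k,i}}, K_j)$ encodes the IP address $w_{ip}$ of the node performing that hop together with the receive-time $w_t$, and is stored via $STR(p_{n,k,i})$ in the database while $W_{F_{n,k,i}} = E(sw_{f_{n,k,i}},K_j)\,||\,sw_{h_{n,k,i}}$ is embedded and forwarded. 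The key structural observation is that a node can only append a well-formed provenance record to the chain if it both (a) holds the current secret key $K_j$, so that the encrypted sub-watermark it writes to the database matches the one it embeds in $W_{F_{n,k,i}}$ (checked at the next hop and at $G$ by comparing $E(sw_{f_{n,k,i}})$ queried from the database against $E(sw^{''}_{f_{n,k,i}})$ extracted from the watermark, cf.\ Algorithms~\ref{alg4} and \ref{alg5}), and (b) passes the hop-by-hop integrity test on $sw_{h_{n,k,i}}$.

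For part $(i)$ — an attacker cannot add legitimate nodes to the provenance of data generated by an \emph{unauthorized} node — I would argue as follows. A provenance record naming a legitimate node $I_l$ can only be produced by $I_l$ itself, since forging $E(sw_{f_{n,k,i}},K_j)$ for an arbitrary $(w_{ip},w_t)$ requires knowledge of $K_j$, which by the Security Assumptions (Section~\ref{sec:assumptions}) and Theorem~1 is unavailable to an external adversary, and because $I_l$ would only execute its generate-and-store step (Lines 12--21 of Algorithm~\ref{alg4}) after the incoming integrity and provenance checks succeed. Hence the only way a legitimate $I_l$ appears in the chain is if data actually transited $I_l$; but a node genuinely forwarding a packet does not thereby make an unauthorized source "legitimate" — at $G$, condition $(a)$ of the verification in Algorithm~\ref{alg5} (origin check on the source IP embedded in the first, innermost provenance record $p_{n,k,1}$) still reveals that the originator is unauthorized. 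So any legitimate nodes in the chain are exactly the honest forwarders, and they cannot be used to launder the origin.

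For part $(ii)$ — no addition or removal of nodes in the provenance of data from a \emph{legitimate} node — I would split into three cases. \textbf{Insertion} of a spurious hop requires writing a matching pair $(p_{n,k,i}$ in the database, $E(sw_{f_{n,k,i}})$ in the embedded watermark$)$ under $K_j$: infeasible for an external attacker by Theorem~1, and the database itself is assumed uncompromisable (Section~\ref{sec:threat}), so the attacker cannot inject a record via $STR$. \textbf{Removal} of a hop's record from the database is likewise blocked by database tamper-proofness; and removing a hop's watermark layer from the in-transit packet desynchronizes the embedded $E(sw^{''}_{f_{n,k,i}})$ from the last stored $p_{n,k,i}$, so the comparison at the next hop or at $G$ fails and the packet is discarded — equivalently, the missing record is detected when $G$ queries the full set $P_{n,k}$ and tries to reconstruct a contiguous path $1,\dots,H$ (Lines 12--18 of Algorithm~\ref{alg5}). \textbf{Reordering or replay} of records is ruled out by the freshness of the timestamps $w_t$ (verification condition $(b)$), which the honest nodes stamp monotonically along the path. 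I expect the main obstacle to be making precise the "contiguity" argument for removal: one must argue that the indices $i$ recorded in the database, the layered structure of the watermark carried by the packet, and the $H$ hops actually traversed all agree, so that a gap is always visible either in-route (a mismatch in the $E(sw_{f})$ comparison) or at $G$ (a hole in $P_{n,k}$) — this is where the bookkeeping of the index $i{+}{+}$ across Algorithms~\ref{alg3}--\ref{alg5} has to be pinned down carefully rather than waved at.
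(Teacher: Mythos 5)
Your proposal is correct and follows essentially the same route as the paper's proof: both parts reduce to the secrecy of the shared key $K_j$ (so nobody outside the legitimate nodes can produce a matching pair of stored and embedded encrypted sub-watermarks), the tamper-proofness of the network database, and the hop-by-hop comparison in Algorithms~\ref{alg4} and~\ref{alg5}. One small correction: the watermark is not layered or nested --- each intermediate node strips the previous watermark and replaces it with a fresh one, so the full provenance chain exists only in the database, which is exactly why your database-centric contiguity check at $G$ (rather than anything read off the packet) is the right place to detect a removed hop.
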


\begin{proof}
$I_l$ stores a provenance record $W_{F_{n,k,i}}$ after checking data integrity and provenance of the received data-packet $d_{(n,k)W_{F_{n,k,i}}}$. The symmetric secret key $K_j$ shared between legitimate nodes is used to obtain the generated watermark $W_{F_{n,k,i}}$ used in data integrity and provenance validation. An unauthorized node generates watermarks using its own secret key that cannot match a generated watermark at $I_l$ using $K_j$. As stated in Section~\ref{sec:threat}, the source node, the intermediate node and the gateway share secret-keys to be used in different steps of the algorithms (encryption/decryption). These keys are changed and redistributed between legitimate nodes after a random number of sessions. Thus, in order to add a legitimate node, an attacker needs to obtain the same symmetric secret key that is only shared within legitimate network nodes of internal registered IP addresses. In the case of two malicious nodes $I_m$ and $I_v$ attempting to execute an attack, a captured data-packet $d_{n,k}$ by a legitimate source node $S_n$ is routed through $S_m$. $d_{n,k}$ has a provenance record of ($I_1$, $I_2$, $I_4$). The malicious node $I_m$ aims to remove $I_2$ and replace it with $I_v$. To add $I_v$ as a provenance record to the database, the malicious node needs to compute the next-hop watermark which requires, as mentioned above, the knowledge of $K_j$ and hash function variables. Hence, the provenance integrity check at the next $I_j$ will fail and an attack is detected. Thus, $I_m$ will fail to add or remove any provenance record from network database. Moreover, provenance records ($W_{F_{n,k,1}}$, $W_{F_{n,k,2}}$, ..., $W_{F_{n,k,i}}$) of a data-packet $d_{n,k}$ are stored in a tamper-proof database that is assumed to be resistant to any alternation of its entities, attackers cannot alter any record stored in it (see Section~\ref{sec:threat}).
\end{proof}

\begin{theorem}
It is impossible for an attacker, whether acting alone or in collaboration with others, to add or authenticate nodes to the provenance of data produced by a compromised node.
\end{theorem}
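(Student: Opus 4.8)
The strategy is to lift the guarantees of Theorem~1, Theorem~2 and Theorem~3 to the setting where the \emph{source} itself is adversarial, using the tamper-proof database as the anchor of trust, and then to show that collusion buys the adversary nothing. Fix notation: let $S_c$ be a compromised source producing a packet $d_{c,k}$, and let the adversary coalition be $S_c$ together with an arbitrary set of other compromised nodes; the coalition wins if it makes some intermediate node $I_l$ or the gateway $G$ reconstruct (via Algorithm~\ref{alg5}) a provenance set $P_{c,k}$ that lists, or authenticates, a legitimate node that did not actually forward $d_{c,k}$.

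First I would note that every provenance record $p_{n,k,i}=ENC(sw_{f_{n,k,i}},K_j)$ entering the reconstruction is obtained by $QRY(P_{n,k})$ from the network database and is cross-checked, hop by hop, against the watermark carried inside the packet (the inner \textbf{if} of Algorithm~\ref{alg4} and of Algorithm~\ref{alg5}). By the threat model of Section~\ref{sec:threat} the database is a trusted entity that no attacker --- external or internal, acting alone or in coalition --- can write to, modify, or delete; hence the coalition cannot simply insert a fabricated record attributing $d_{c,k}$ to a legitimate node. A genuine record for a legitimate node $I_l$ is written only after $I_l$ has run its own verification on the inbound packet and accepted it, i.e.\ only when $d_{c,k}$ truly transited $I_l$ --- in which case $I_l$ is a faithful, not a forged, element of the provenance.

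Next I would eliminate the only remaining avenue: forging the in-packet watermark chain so that it is consistent with a database state the coalition does not control. Fabricating the feature sub-watermark $E(sw_{f_{n,k,i}})$ for a spoofed hop requires the shared key $K_j$, which by Theorem~1 and the key-redistribution assumption is never available to a compromised node (an unauthorized or key-revoked node can only encrypt under its own key, which will fail to match the value expected by the next legitimate hop or by $G$); and fabricating $sw_{h_{n,k,i}}$ consistently requires a second preimage of the secure hash $H$, which is infeasible (Theorem~2, using SHA-2 with a $256$-bit digest). Therefore one of the two equality tests in Algorithm~\ref{alg4}/Algorithm~\ref{alg5} fails, the packet is discarded, an attack procedure is triggered, and the spoofed node is never authenticated into $P_{c,k}$. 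For collusion specifically: since $K_j$ is the only secret shared among legitimate nodes and is never exposed to any compromised node, and since the database is equally out of reach of every coalition member, pooling several compromised nodes yields no new keys, no database write access, and no ability to invert $H$; thus the bound for a coalition is exactly the bound for a lone attacker, matching what Theorem~3 already established for this type of manipulation.

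I expect the delicate case to be a node that is \emph{still a currently-registered legitimate node at the moment it is compromised}, and hence genuinely holds $K_j$. Here the argument cannot rely on key secrecy and must instead lean on two facts: such a node still cannot write database entries on behalf of \emph{other} nodes (so it cannot populate $P_{c,k}$ with third parties), and the gateway's origin check on the source IP address together with its freshness check on the timestamp $w_t$ bound what the compromised node can plausibly claim about its own data. Making this fully rigorous really needs an explicit assumption that writes to the network database are authenticated per node (implicit in, but not spelled out by, Section~\ref{sec:threat}); I would state that assumption explicitly, after which the case closes in one line.
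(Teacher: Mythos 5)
Your proposal is correct and its central argument coincides with the paper's: the compromised source cannot produce a next-hop-acceptable watermark because the provenance/integrity validation at the first legitimate hop (Algorithm~4) is keyed on $K_j$, which the attacker does not hold, so the forgery is rejected at the first hop. The paper's proof is essentially only that one observation, stated in four sentences. What you do differently is (i) make explicit that the tamper-proof database blocks the alternative avenue of directly planting records that implicate legitimate nodes, (ii) argue separately that collusion yields no additional keys, no database write access, and no ability to invert $H$, so the coalition bound equals the lone-attacker bound, and (iii) isolate the genuinely delicate case that the paper silently assumes away: a node that is a currently registered legitimate participant at the moment of compromise, and therefore \emph{does} hold $K_j$. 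The paper's proof only covers an attacker using ``its own secret key'' distinct from $K_j$, which is arguably not what ``compromised node'' means; your observation that closing this case requires an explicit assumption of per-node authenticated writes to the network database (not stated in Section~3.3) identifies a real gap in the paper's argument rather than in yours. In short, your proof subsumes the paper's and is strictly more careful; the paper's buys brevity at the cost of leaving the strongest adversary (a key-holding insider) unaddressed.
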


\begin{proof}
An attacker may generate fake data and store provenance information in the database as a legitimate node with its secret key. The packet is then forwarded to the next hop intermediate node to store the next hop provenance information in the set of provenance records $P_{n,k}$ for this data packet in the database. The attacker's aim is to construct the provenance from innocent forwarding nodes and make them responsible for false data forwarding, thus marking them as untrustworthy nodes. However, there is an integrity and provenance validation procedure at the next hop node, which includes a watermark re-generation process $W_{F_{(n,k,i)}}$ using the  secret key $K_j$, the attacker do not know the key for legitimate nodes. Thus, this attack will fail at the first hop. 
\end{proof}

\begin{theorem}
Any unauthorized attempt to modify data content through transmission channel would be detected.
\end{theorem}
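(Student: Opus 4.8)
The plan is to reduce the statement to the collision/second-preimage resistance of the hash function used in the zero-watermark, essentially re-using the machinery already established in Theorem~2. First I would fix notation: let $d_{n,k}$ be the data content produced by a legitimate source node $S_n$, and suppose an adversary on the transmission channel modifies it to some $d^{*}_{n,k}\neq d_{n,k}$ (either by flipping/replacing bits in the payload, or by permuting data values, which is precisely the attack that defeats the scheme of Hameed \etal as noted in Section~\ref{sec:relatedWork}). The receiving entity (an intermediate node $I_l$ in the multi-hop case via Algorithm~\ref{alg4}, or the gateway $G$ via Algorithm~\ref{alg2}/Algorithm~\ref{alg5}) extracts the watermarked packet into $d^{*}_{n,k}$ and $W^{'}_{F_{n,k,i}}$, then re-generates the hash sub-watermark $R(sw_{h_{n,k,i}}) = H(d^{*}_{n,k})$ and compares it against the sub-watermark $sw_{h_{n,k,i}} = H(d_{n,k})$ obtained either from the extracted watermark or, more importantly, from the tamper-proof database query $QRY(p_{n,k,i})$.

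The core argument then runs as follows. For the modification to go undetected, the comparison in Line~7 of Algorithm~\ref{alg2} (resp. Line~6 of Algorithm~\ref{alg4}, Line~6 of Algorithm~\ref{alg5}) must pass, i.e. $H(d^{*}_{n,k}) = H(d_{n,k})$ must hold (truncated to the first $8$ bytes of SHA-2 output, as specified in the algorithms). Since $d^{*}_{n,k}\neq d_{n,k}$, this would constitute a collision for the (truncated) hash function. By the assumption in Section~\ref{sec:assumptions} that the hash functions are secure and cannot be inverted, and invoking the SHA-2 collision-resistance bound already used in Theorem~1 ($2^{128}$ work for the full $256$-bit digest, $2^{n/2}$ for an $n$-bit truncation), this event occurs only with negligible probability. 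I would also remark that the adversary cannot instead ``fix up'' the mismatch by overwriting the stored $sw_{h_{n,k,i}}$ in the database, because the database is assumed tamper-proof and inaccessible to an external attacker (Section~\ref{sec:threat}), and cannot overwrite the in-packet copy without knowing where it sits relative to the encrypted provenance sub-watermark $E(sw_{f_{n,k,i}})$ — and even a blind overwrite of $W^{'}_{F_{n,k,i}}$ is caught by the database comparison, exactly as argued in Theorem~2.

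To finish, I would note the two sub-cases that need a sentence each: (i) modification of the payload only — caught by the hash mismatch as above; (ii) modification of the watermark field accompanying the data — this is already covered by Theorem~2 (insertion/deletion on $W_{F_{n,k,i}}$), and the same comparison-failure argument applies since the queried sub-watermarks $E(sw_{f_{n,k,i}})$ and $sw_{h_{n,k,i}}$ come from the trusted database. Combining both, any alteration of data content in transit forces at least one of the two equality checks to fail, triggering the ``attack procedure'' branch, hence detection. The main obstacle I anticipate is a precise treatment of the \emph{truncation}: the algorithms keep only the first $8$ bytes ($64$ bits) of the hash, so the honest security level against a blind modification is $2^{-64}$ per attempt rather than $2^{-128}$; the proof should either state this explicitly as the detection-failure probability or argue it is still acceptable for the lightweight IoT setting (and note it can be raised by keeping more hash bytes at the cost of bandwidth). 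A secondary subtlety is that ``modification'' as an attack here is assumed to be done ``without knowledge of the data content'' (Section~\ref{sec:threat}), so the adversary cannot even attempt a targeted near-collision search on a known $d_{n,k}$; this only strengthens the conclusion and should be mentioned.
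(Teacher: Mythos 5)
Your proposal is correct and follows essentially the same two-case argument as the paper's own proof: payload modified while the watermark is intact is caught by the mismatch between the re-generated hash sub-watermark and the extracted/queried one, and watermark modified while the payload is intact is caught by the comparison against the provenance record queried from the tamper-proof database. Your added observation that the algorithms truncate the hash to its first $8$ bytes, so the honest detection-failure probability is $2^{-64}$ per blind attempt rather than the $2^{-128}$ figure the paper carries over from Theorem~1, is a refinement the paper's proof omits and would strengthen it.
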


\begin{proof}
An adversary may perform a modification to the embedded watermark (computed as $W_{F_{n,k,i}} = E(sw_{f_{n,k,i}})$ $ || $ $sw_{h_{n,k,i}}$) or data elements $d_{(n,k)W_{F_{n,k,i}}}$. If data elements are modified and $W_{F_{n,k,i}}$ remains unchanged, a different watermark is obtained based on a wrong hash value at an intermediate node or gateway. Since the first generated sub-watermark at source node $sw_{h_{n,k,i}}$ is the output of a hash function SHA-2 obtained as $sw_{h_{n,k,i}} = H(d_{n,k})$. Again, the assumption of hash functions used in the system (Section~\ref{sec:threat}) applies. The wrong re-generated sub-watermark $R(sw_{h_{n,k,i}})$ will not match the extracted sub-watermark $sw^{'}_{h_{n,k,i}}$. The intermediate node or gateway detects the modification attack and discards the data. Furthermore, if the attacker modifies $W_{F_{n,k,i}}$ and the data payload remains unchanged, the intermediate node or gateway re-generates the right sub-watermark, extract the modified watermark from the received data packet and queries the provenance record $p_{n,k,i}$ from the database. This results in a failed comparison operation for data integrity or for provenance validation and data will be discarded.
\end{proof}

\begin{theorem}
By including a timestamp in the generation process of watermarks, any fraud transmission of previously captured data packets will be discovered.
\end{theorem}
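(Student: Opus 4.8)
The plan is to show that a replayed packet necessarily carries a stale timestamp that the receiver rejects, and that the adversary cannot refresh that timestamp without the secret key. First I would recall the structure of the watermark from Equation~\eqref{eq:2}: the first sub-watermark $sw_{f_{n,k,i}} = w_{ip} \, || \, w_{t}$ encodes, together with the originating IP address, the time $w_{t}$ at which the data was sensed at a source node (or at which the watermarked packet was received, at an intermediate node), and it is carried only in the encrypted form $E(sw_{f_{n,k,i}}) = ENC(sw_{f_{n,k,i}},K_j)$ both inside $W_{F_{n,k,i}}$ and as the provenance record $p_{n,k,i}$ stored in the tamper-proof database.

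Next I would spell out the replay scenario. An adversary eavesdrops a valid watermarked packet $d_{(n,k)W_{F_{n,k,i}}}$ at some time $t_0$ and re-injects it at a later time $t_1 > t_0$ toward an intermediate node $I_l$ or the gateway $G$. On reception, the verification procedure (Algorithm~\ref{alg4} at $I_l$, Algorithm~\ref{alg5} at $G$) decrypts $E(sw_{f_{n,k,i}})$ with $K_j$ and applies the freshness test on $w_{t}$ (condition~(b) of the gateway verification in Section~\ref{sec:sysmodel}). Since $w_{t}$ corresponds to the original instant $t_0$ while the local clock reads $t_1$, the elapsed interval $t_1 - t_0$ exceeds the admissible end-to-end transmission window, the freshness check fails, and the packet is discarded and handled by the attack procedure.

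Then I would close the obvious loophole: could the adversary overwrite $w_{t}$ with a current value before replaying? Here Theorem~1 is invoked --- $w_{t}$ lives only inside the encrypted sub-watermark $E(sw_{f_{n,k,i}})$, so without $K_j$ the adversary can neither decrypt it nor produce a valid re-encryption of an updated timestamp; moreover any tampering with $E(sw^{'}_{f_{n,k,i}})$ is caught by the provenance-integrity comparison against the stored $p_{n,k,i}$ (as already argued for Theorems~2 and~3). I would also add the complementary database argument: once $G$ has retrieved and validated the lineage $P_{n,k}$ of a packet, those records are removed from the database (see Section~\ref{sec:assumptions}), so a subsequent replay finds no matching provenance record to query and fails at the $QRY(p_{n,k,i})$ step as well.

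The main obstacle I expect is making the freshness argument rigorous rather than heuristic: it requires fixing, as part of the system model, an explicit bound on legitimate end-to-end delay together with a bound on inter-node clock skew, so that ``stale timestamp'' is a well-defined predicate. With those bounds in place, the detection claim follows from a short case split on whether $t_1 - t_0$ exceeds the bound: if it does, the freshness test triggers; if it does not, the replay is either indistinguishable from a still-in-flight legitimate copy (hence not a meaningful fraud) or is still caught because the database entry for that packet has already been consumed. I would therefore preface the proof by stating the delay/skew assumption and then present the case analysis.
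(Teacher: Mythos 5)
Your proposal is correct and rests on the same two pillars as the paper's proof --- the timestamp $w_{t}$ lives only inside $E(sw_{f_{n,k,i}}) = ENC(sw_{f_{n,k,i}},K_j)$, and an adversary without $K_j$ can neither read nor re-encrypt a refreshed value --- but you organize the argument differently and, in places, more carefully. The paper's proof concentrates on the case where the attacker \emph{does} try to update the timestamp: it argues that the re-generated sub-watermark at $I_l$ or $G$ will then fail to match the extracted one, and separately notes that the encrypted copy of $w_{t}$ inside $W_{F_{n,k,i}}$ cannot be altered without the key. It only implicitly relies on the freshness condition (condition (b) of Algorithm~5) to rule out a verbatim, unmodified replay. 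You make that case split explicit: a verbatim replay is caught by the freshness test against the local clock, a modified replay is caught by the key-secrecy and provenance-comparison arguments, and you add a third line of defense the paper does not use at all --- that the provenance records $P_{n,k}$ are deleted from the database after retrieval, so a late replay fails already at the $QRY(p_{n,k,i})$ step. Your closing observation is also a fair criticism of the paper's treatment: the freshness predicate is only well-defined given explicit bounds on legitimate end-to-end delay and clock skew, and neither the paper's system model nor its proof states such bounds, so the paper's detection claim for an unmodified replay within the legitimate delivery window is weaker than the theorem's unqualified wording suggests. Your version buys rigor and completeness at the cost of an extra modeling assumption; the paper's version is shorter but leaves the verbatim-replay case resting on an unstated timing assumption.
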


\begin{proof}
An attacker may provide a false idea about the sensing environment by fraudulently transmitting previously heard data packets that are captured and transmitted by a legitimate source node~\cite{Roosta2008}. The attacker also detects the timing characteristics to be used later during the packet replay attack. To deceive an intermediate node or gateway, the attacker updates the timestamp $w_{t}$ of the heard data packet $d_{n,k}$, based on timing characteristics, to a new recent time value. In the proposed scheme, a source node generates a watermark $W_{F_{n,k,i}}$ for each data packet captured ($d_{n,k}$). The generation process is based on provenance information, a timestamp and a hash value as described in Equation~\eqref{eq:1}. Provenance information and timestamp will be encrypted using a secret key $K_j$ to form the first sub-watermark (\ie encrypting $sw_{f_{n,k,i}} = w_{ip} $ $||$ $w_{t}$ as $E(sw_{f_{n,k,i}}) =$ $ENC(sw_{f_{n,k,i}}$,$K_j)$). At next hop $I_l$ or $G$, a new sub-watermark is generated from the replayed packet that will be compared to the extracted sub-watermark. If the attacker changed the timestamp of the data packet $d_{n,k}$ the comparison operation will fail. Since timestamps are different the new re-generated sub-watermark will not match the extracted one. Note that the attacker cannot modify the timestamp $w_{t}$ embedded in the watermark $W_{F_{n,k,i}}$, due to the encryption process performed on the generated sub-watermark $sw_{f_{n,k,i}}$. The sub-watermark is encrypted using the source secret key $K_j$, which is only shared with legitimate entities (intermediate node and gateway) where an attacker uses a different secret key as stated in Section~\ref{sec:threat}. Hence, replaying an old packet with an updated timestamp will lead to a failed authentication procedure.
\end{proof}

\begin{theorem}
In Algorithm 7, an attacker trying to deceive network devices to accept malicious datagrams as trusted internal managing datagrams will be detected and examined by implemented security algorithms.
\end{theorem}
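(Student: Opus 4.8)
The plan is to prove the statement by an exhaustive case analysis over the branches of Algorithm~\ref{alg7}, showing that every execution path reachable by a malicious datagram $d_{IP}$ ends either in a call that hands $d_{IP}$ to the \gls*{ids} --- in which case it is subjected to the watermark-generation/verification machinery already covered by the preceding theorems --- or in the explicit ``attack detection / $d_{IP}$ discarded'' branch (lines 10--12). The only leaf of the decision tree that does neither is the one authenticating $d_{IP}$ as a trusted internal-managing packet (lines 7--9). Hence the whole theorem reduces to the single assertion: a malicious datagram cannot reach that leaf.

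First I would dispatch the straightforward branches. If $d_{IP}$ does not carry both an internal (registered) source and an internal destination address, lines 14 or 17 forward it to the \gls*{ids}; done. If it does carry two internal addresses but its length differs from that of a genuine internal-managing packet, the size test on line 13 fails and it is again sent to the \gls*{ids}; here I would invoke Equation~\eqref{eq:3}, since any watermarked sensor datagram has length $|d_{n,k}| + |W_{F_{n,k,i}}|$, strictly larger than an unwatermarked internal-managing datagram, so the two sizes are distinguishable by the receiving node. Next, if $d_{IP}$ has the right form and size but the $32$ bits read from the Identification/Flags/Fragment-offset fields do not equal the selected bits of the SHA-2 digest of the destination IP concatenated with the first $20$ bytes of $d_{IP}$, the equality test on line 7 fails and the datagram is discarded as an attack.

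The remaining --- and genuinely hard --- case is an attacker who crafts a datagram passing all three checks at once: two spoofed internal registered addresses, the exact internal-managing packet size, and a correctly embedded digest. I would argue infeasibility on two fronts. $(i)$ If the attacker starts from a legitimate internal-managing packet and tampers with any of its first $20$ content bytes or its destination IP, then by the second-preimage resistance of the hash function assumed in Section~\ref{sec:threat} the attacker cannot exhibit a new prefix whose digest still matches the already-embedded tag, so in-transit modification is caught on line 7. $(ii)$ For a freshly injected datagram, I would appeal to the randomized selection of the $32$ embedding positions among the $256$ digest bits described in Section~\ref{sec:internal}: the PRNG seed, and hence the set of checked positions, is known only to legitimate internal nodes, so an attacker who knows only the plain digest must guess which bits are being verified, succeeding with negligible probability; moreover, placing two internal registered addresses in the header already requires address spoofing inside the trusted local network, itself an anomaly the \gls*{ids} is positioned to flag. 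Combining the cases, every malicious datagram is either examined by the \gls*{ids} or discarded by Algorithm~\ref{alg7}, which is the claim.

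I expect step $(ii)$ to be the main obstacle: in the base construction the tag is the plain (keyless) digest placed in the fixed $32$ least-significant bits, so an adversary able to spoof internal addresses could in principle recompute and re-embed a valid tag. The cleanest way to close the gap in the write-up is to make the randomized position selection --- or, equivalently, replacing $H(\cdot)$ by a keyed $\mathrm{HMAC}$ under a key shared only among internal nodes --- an explicit hypothesis of the theorem, after which the probability bound and the reduction to second-preimage/forgery resistance become routine.
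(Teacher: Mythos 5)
Your proposal follows the same skeleton as the paper's proof --- the receiving node recomputes $H(\text{Destination IP} \, || \, \text{first 20 bytes of } d_{IP})$, compares it with the tag extracted from the Identification/Flags/Offset fields, and discards on mismatch --- but it is both more complete and more honest than what the paper actually writes. The paper's proof only treats the in-transit modification case and justifies it by appealing to the infeasibility of finding a pair $(x,y)$ with $h(x)=h(y)$, concluding that it is ``impossible for an attacker to invert the hash value and embed it.'' Your exhaustive walk through the address filter, the size filter (via Equation~\eqref{eq:3}), and the tag check is a genuine improvement in rigor, and your step $(ii)$ puts its finger on a real gap that the paper's own argument papers over: the tag is an \emph{unkeyed} hash of entirely public data, so an adversary who can spoof two internal addresses and match the packet size does not need to invert or collide anything --- they can simply recompute $H(\cdot)$ on their forged prefix and embed a valid tag in the fixed 32 \gls*{lsb} positions. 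Second-preimage resistance, which both you and the paper invoke for the modification case, does not rescue the fresh-injection case. Your proposed remedy --- elevating the secret randomized position selection (which the paper mentions only as an optional enhancement at the end of Section~\ref{sec:internal}) or a keyed HMAC to an explicit hypothesis of the theorem --- is exactly what is needed for the statement to hold as claimed; without it, the theorem as stated is not actually established by either your argument or the paper's. This is a case where your blind attempt identifies a deficiency in the published proof rather than containing one itself.
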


\begin{proof}
If an attacker succeeds to modify an internal managing datagram, the datagram will be forwarded to the next hop node. At the receiving node, a hash value is computed from the content of the datagram using a one way hash function as detailed in Section~\ref{sec:internal}. It also extracts the hash value embedded by the source node from the identification field of the IP header. Both hash values are then compared to detect any attempt of forgery attack. If the values do not match, the device applies the implemented security algorithms to the received IP datagram and an attack procedure is performed. Note that a source node uses a one-way cryptographic hash function $H()$ using SHA-2 to obtain the hash value (embedded in internal managing IP datagram's  header) so that it is computationally infeasible to find a pair $(x,y)$ such that $h(x) = h(y)$, making it impossible for an attacker to invert the hash value and embed it to deceive the system (Section~\ref{sec:threat}). Hence, a malicious entity trying to deceive the forwarding nodes using internal managing datagrams will be detected and discarded.
\end{proof}

Based on the above security analysis, the proposed scheme is proven to be resistant against various malicious attacks of \gls*{iot} networks, such as modification attack, integrity attack, packet replay, database authentication attack and passive attacks. It guarantees the integrity of data and ensures security against identifying and retrieving provenance information in \gls*{iot} networks. 

\section{Simulation Results and Analysis}
\label{sec:results}
In this section, we evaluate the performance of the proposed scheme based on two features: data integrity and data provenance. For data integrity, the proposed scheme is evaluated based on watermark generation, embedding and verification time. Also, we have measured how this scheme performs in terms of energy utilization. The results are then compared to three state of the art techniques: \gls*{rwfs}~\cite{Alromith2018}, \gls*{act}~\cite{Sun2013} and \gls*{zwt}~\cite{hameed2018} based methods. We selected these three state-of-the-art methods to assess the performance of our new security technique based on their use of different security techniques deployed in a similar network model. For data provenance, we compare our scheme with MAC-based provenance scheme (MP) in terms of cost analysis. The algorithms were implemented in MATLAB\textsuperscript{\texttrademark} on Intel core i7 processor with a 2.59 GHz clock cycle and 16 GB of memory. Sensor data is represented as an integer data type, since most sensor readings are of numeric form such as temperature, humidity, motion and intensity. \par
In our algorithm, we use \gls*{aes} with 128 bit key size for encryption of generated watermarks. Despite the fact that \gls*{aes} has a larger key size than \gls*{des}, \gls*{aes} is a more secure and advanced encryption algorithm compared to \gls*{des}, which makes it more resistant to cryptanalysis attacks. Another reason for using \gls*{aes} is its performance and efficiency. \gls*{aes} is a fast and efficient algorithm. We provide, in Figures~\ref{fig:f6},~\ref{fig:f7},~\ref{fig:f8} and~\ref{fig:f9}, a comparison of using \gls*{aes} and \gls*{des} algorithms in the generation and verification processes at each sensor node in our proposed model. The results show the better performance of our scheme when applying \gls*{aes} algorithm (approximately 10 times faster) in both watermark generation and verification. The use of substitution-permutation network (SPN) structure, which is optimized for hardware implementation and allows for parallel processing in \gls*{aes} shows a faster performance than \gls*{des}, which uses a Feistel network structure. Regarding the hash function, we use a one way hash function SHA-2, specifically SHA-256, for generating the second sub-watermark $sw_h$. Although SHA-1 is faster than SHA-2 functions since it uses a smaller block size and has a simpler construction, however it is important to note that the slower performance of SHA-2 functions is outweighed by their improved security compared to SHA-1. We compared the generation and verification time of our proposed model using different hash functions in Figures~\ref{fig:f10} and~\ref{fig:f11}. The results shows that SHA-1 is faster than SHA-2 functions and SHA-2(256) function requires less processing time than SHA-2(384) and SHA-2(512). Hence, we use SHA-2(256), which provides the best performance in SHA-2 functions, as our hash function in the generation of watermarks. 

\begin{figure*}[!htbp]
  \centering
  \subfloat[]{\includegraphics[trim = 11mm 5mm 21mm 2mm,clip,width=0.5\textwidth]{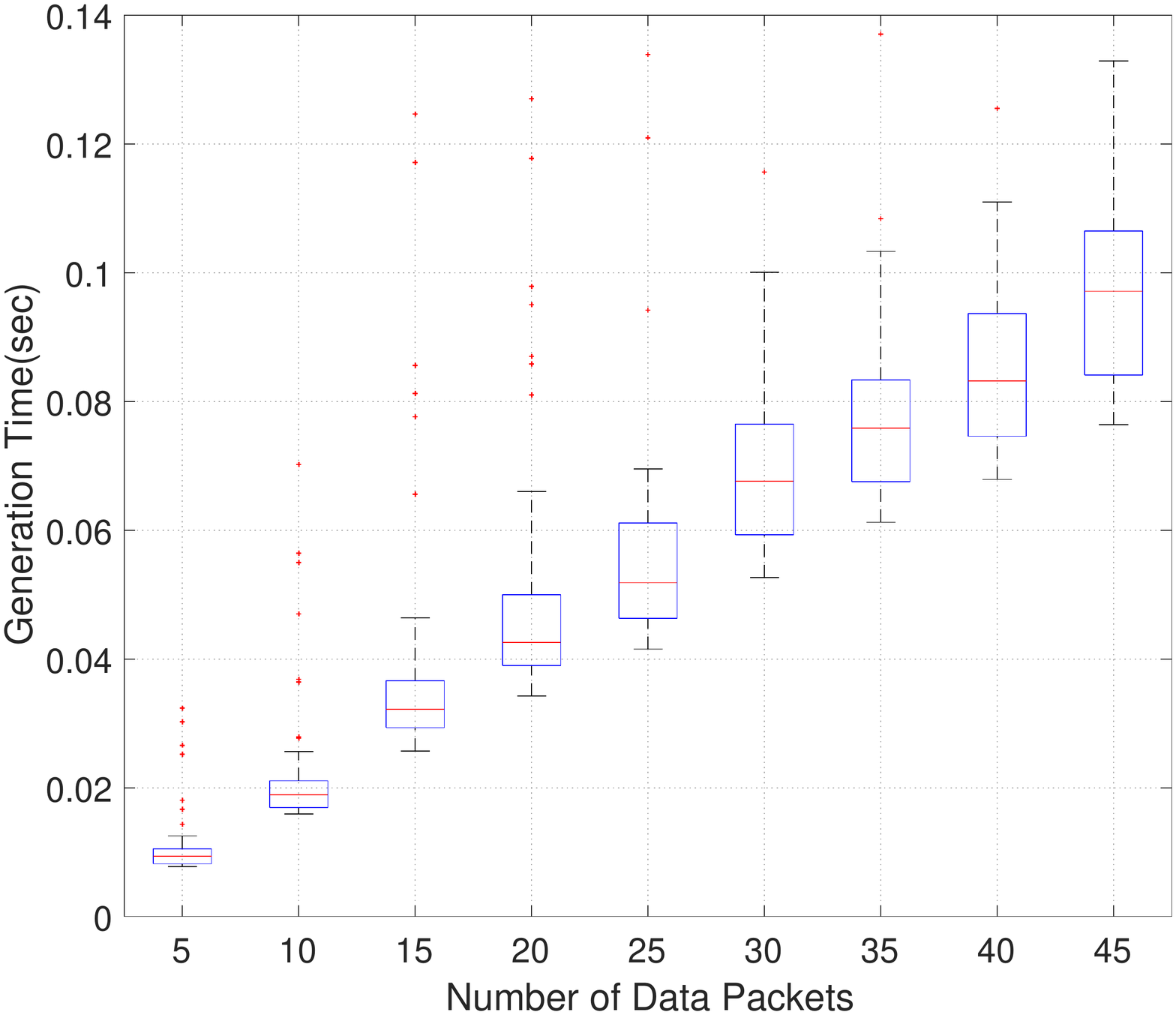}\label{fig:f6}}
  \subfloat[]{\includegraphics[trim = 11mm 5mm 21mm 2mm,clip,width=0.5\textwidth]{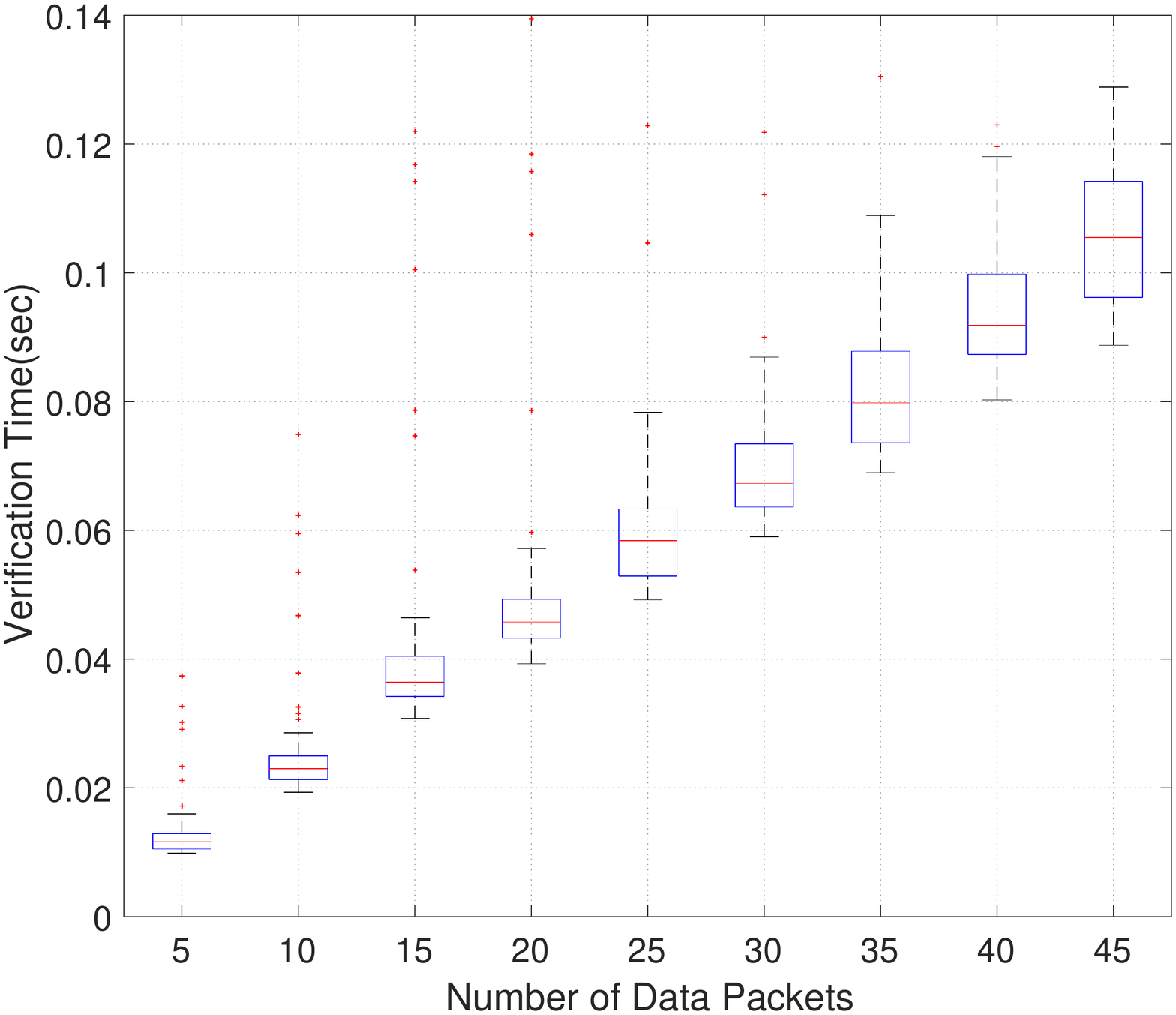}\label{fig:f7}}
  \caption{Computational time. (a) Watermark generation and embedding time using AES. (b) Watermark verification time using AES.}
\end{figure*}

\begin{figure*}[!htbp]
  \centering
  \subfloat[]{\includegraphics[trim = 11mm 5mm 21mm 2mm,clip,width=0.5\textwidth]{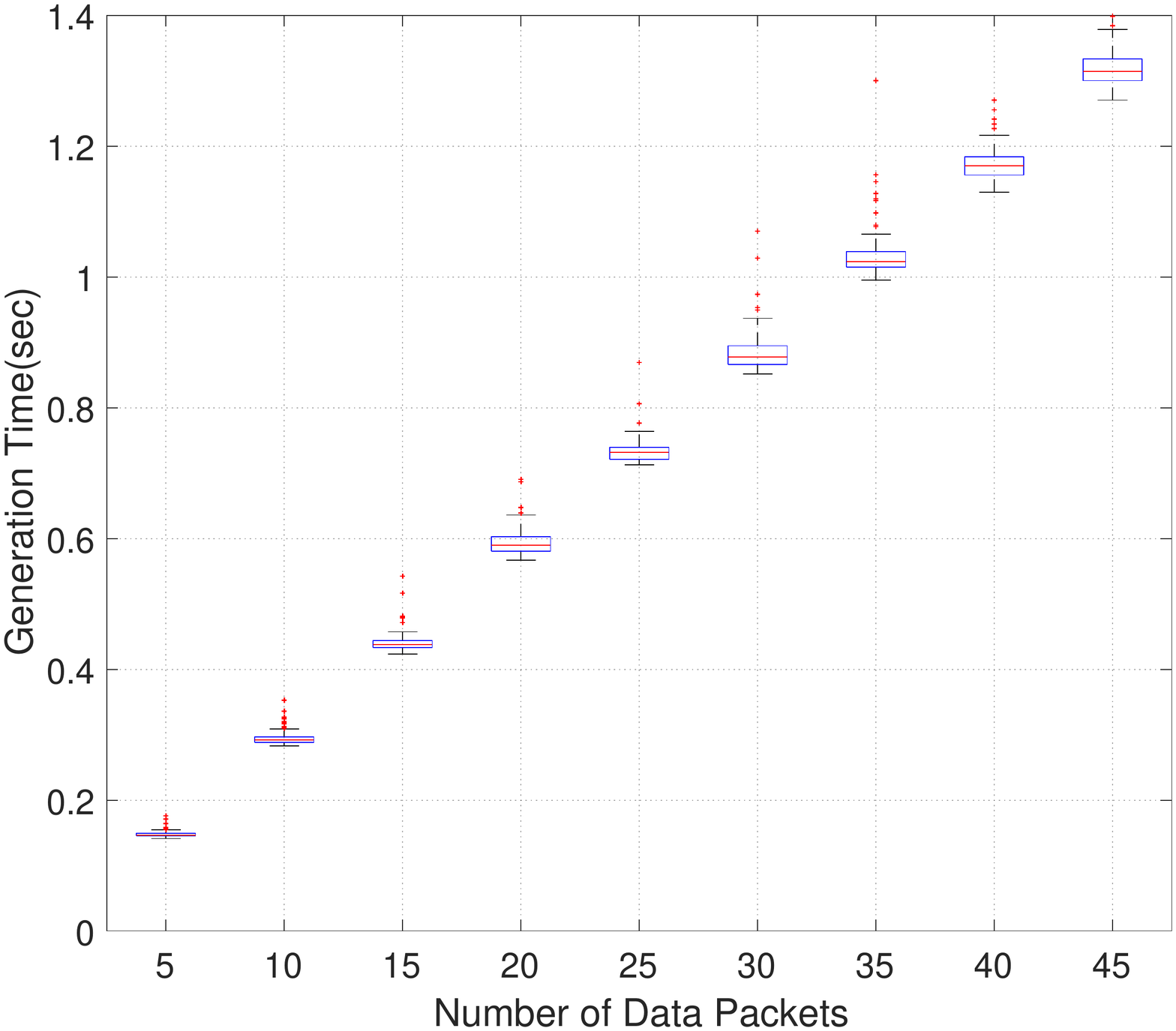}\label{fig:f8}}
  \subfloat[]{\includegraphics[trim = 11mm 5mm 21mm 2mm,clip,width=0.5\textwidth]{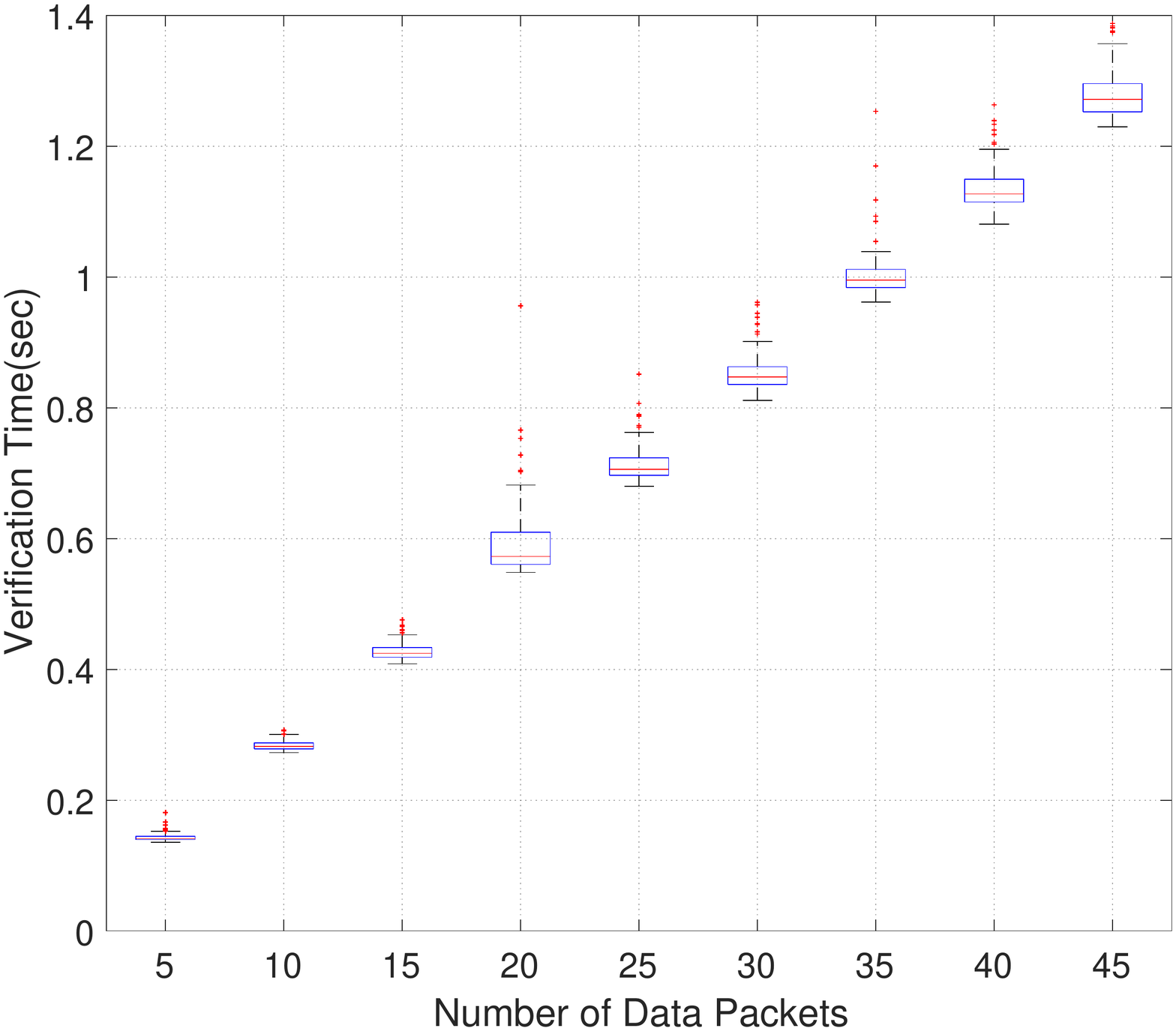}\label{fig:f9}}
  \caption{Computational time. (a) Watermark generation and embedding time using DES. (b) Watermark verification time using DES.}
\end{figure*}

\begin{figure*}[!htbp]
  \centering
  \subfloat[]{\includegraphics[trim = 11mm 1mm 21mm 0mm,clip,width=0.5\textwidth]{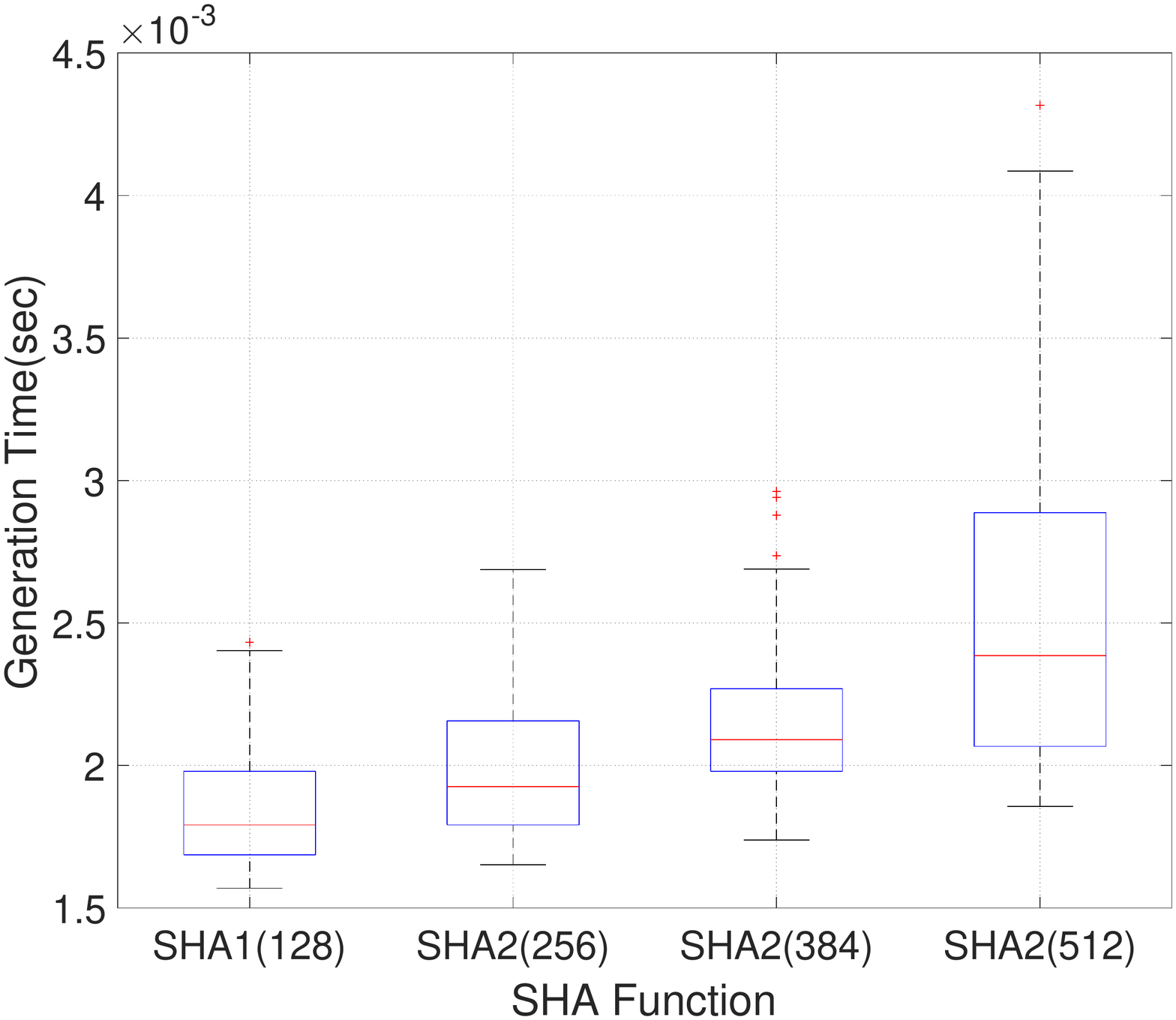}\label{fig:f10}}
  \subfloat[]{\includegraphics[trim = 11mm 1mm 21mm 0mm,clip,width=0.5\textwidth]{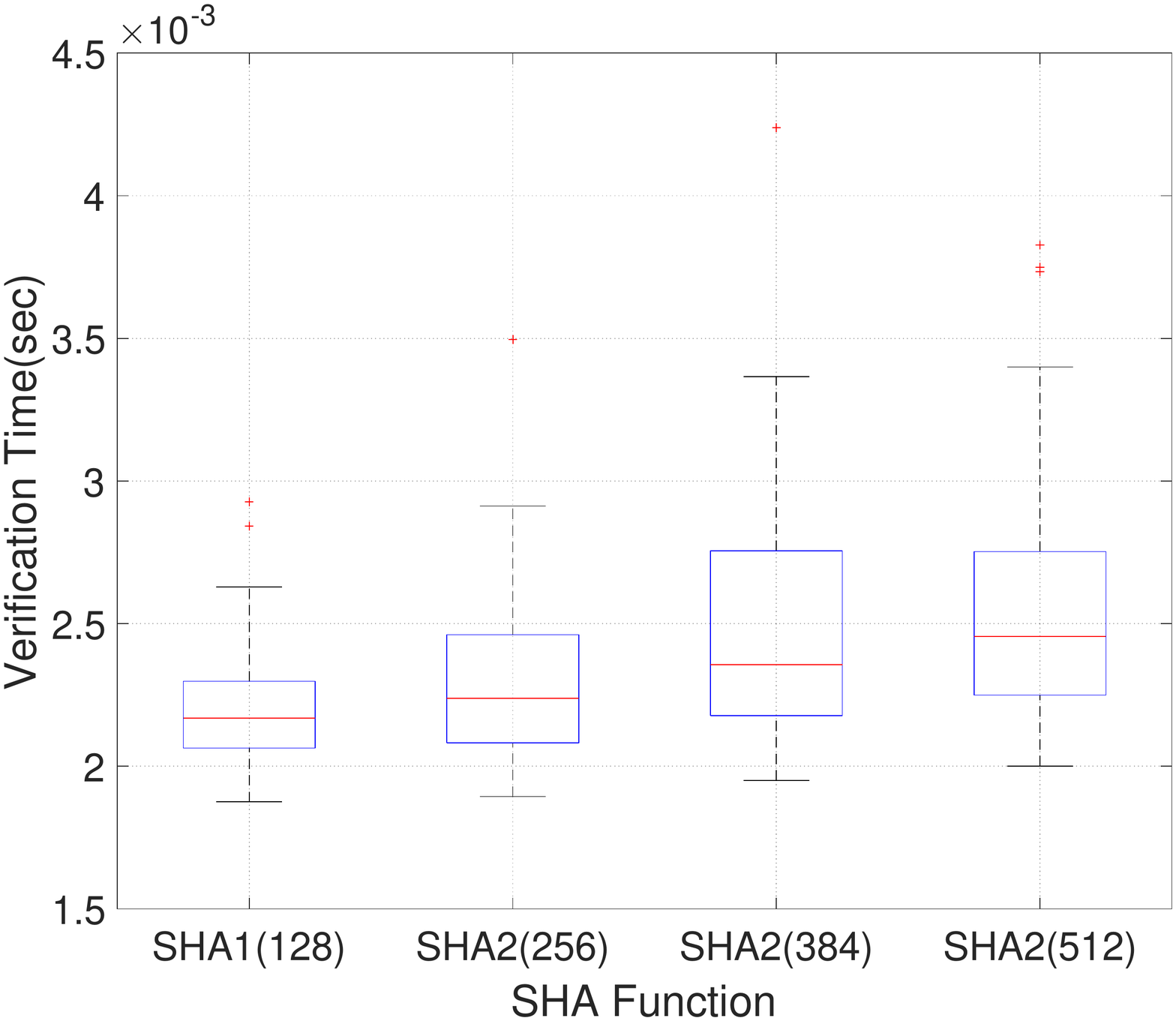}\label{fig:f11}}
  \caption{SHA Comparison. (a) Watermark generation and embedding time using different SHA functions. (b) Watermark verification time using different SHA functions. }
\end{figure*}

\subsection{Performance Evaluation}
To evaluate the performance, we measure the computational time such as watermark generation, embedding, and watermark verification time of our proposed scheme, \gls*{rwfs}~\cite{Alromith2018}, \gls*{act}~\cite{Sun2013} and \gls*{zwt}~\cite{hameed2018}. Additionally, we used energy utilization as another performance metrics and compared the results with existing methods~\cite{Alromith2018,Sun2013,hameed2018}. We select these three works from the literature to compare our model with a regular watermarking technique, asymmetric cryptography technique and a zero-watermarking technique. From our research work, these papers provide these three methods and deploys it in a scenario similar to what we are analyzing and studying. Note that a confidence interval is added to show the average generation and verification time after a 100 simulation runs.

\subsubsection{Computational time}
Computational time is described as the time required to complete the following processes: watermark generation, embedding and verification at sensor nodes and gateway. 

\begin{enumerate}
\item \textbf{Watermark generation and embedding time:}  It is the time taken by a source node or intermediate node to generate a watermark and embed it in the data packet. The existing \gls*{rwfs}~\cite{Alromith2018} generates a watermark by encrypting the sensed data with a homomorphic encryption algorithm proposed by Castelluccia et al. \cite{Castelluccia2009} and passing it as an input to a keyed-hash message authentication code (HMAC). The watermark is then embedded randomly by computing each position of watermark bits using a pseudo-random number generator (PRNG) for each captured data at source node. In \gls*{act}~\cite{Sun2013}, the watermark generation is based on an asymmetric cryptography function and uses group hashing for a set of data values that need to be captured in different time intervals before generating the watermark. Additionally, \gls*{zwt}~\cite{hameed2018} uses \gls*{des} for watermark encryption in the watermark generation process. Comparing these approaches~\cite{Alromith2018,Sun2013,hameed2018}, our proposed scheme uses a zero-watermarking technique that generates a fixed size watermark from provenance information and data features. It applies a one-way hash function to extracted data features and symmetric encryption (\ie \gls*{aes}) for provenance information. Simulation results shows that the proposed scheme requires less watermark generation and embedding time than existing approaches~\cite{Alromith2018,Sun2013,hameed2018} as observed in Figure~\ref{fig:f12}. Using AES as an encryption and SHA-2 to generate watermarks shows a significant improvement in the performance of sensor nodes. This results in decreasing the end-to-end time from capturing data to processing it at the destination gateway. \\
\item \textbf{Watermark verification time:} The verification algorithm is used to extract watermark and verify data integrity at the destination node. This procedure is performed at an intermediate node or gateway which have more computational and power capabilities than source nodes. In the proposed approach, the watermark is concatenated to the data payload and each watermark is generated using AES and SHA-2 for each data packet which requires less extraction and verification time than \gls*{rwfs}~\cite{Alromith2018}, \gls*{act}~\cite{Sun2013} and \gls*{zwt}~\cite{hameed2018}. The time for extracting and verifying data integrity in~\cite{Alromith2018} depends on computing each watermark bit position and computing a hash value after re-encrypting the extracted data. In \cite{Sun2013}, the intermediate node or gateway requires receiving several data packets to perform watermark extraction and re-calculating the watermark based on asymmetric encryption to perform verification. Moreover, in \gls*{zwt}~\cite{hameed2018}, the intermediate node or gateway needs to extract data features from the received data packet and encrypt these features using DES algorithm to re-generate the watermark for verification. Figure~\ref{fig:f13} shows that the proposed zero-watermark approach requires less time to extract and verify data integrity than existing schemes~\cite{Alromith2018,Sun2013,hameed2018}. It is worth pointing out that the proposed approach provides both data integrity and data provenance. The time shown in Figure~\ref{fig:f13} for our proposed scheme includes also the time needed for querying the stored watermarks from the database.  
\end{enumerate}

\begin{figure*}[!htbp]
  \centering
  \subfloat[]{\includegraphics[trim = 11mm 5mm 21mm 2mm,clip,width=0.5\textwidth]{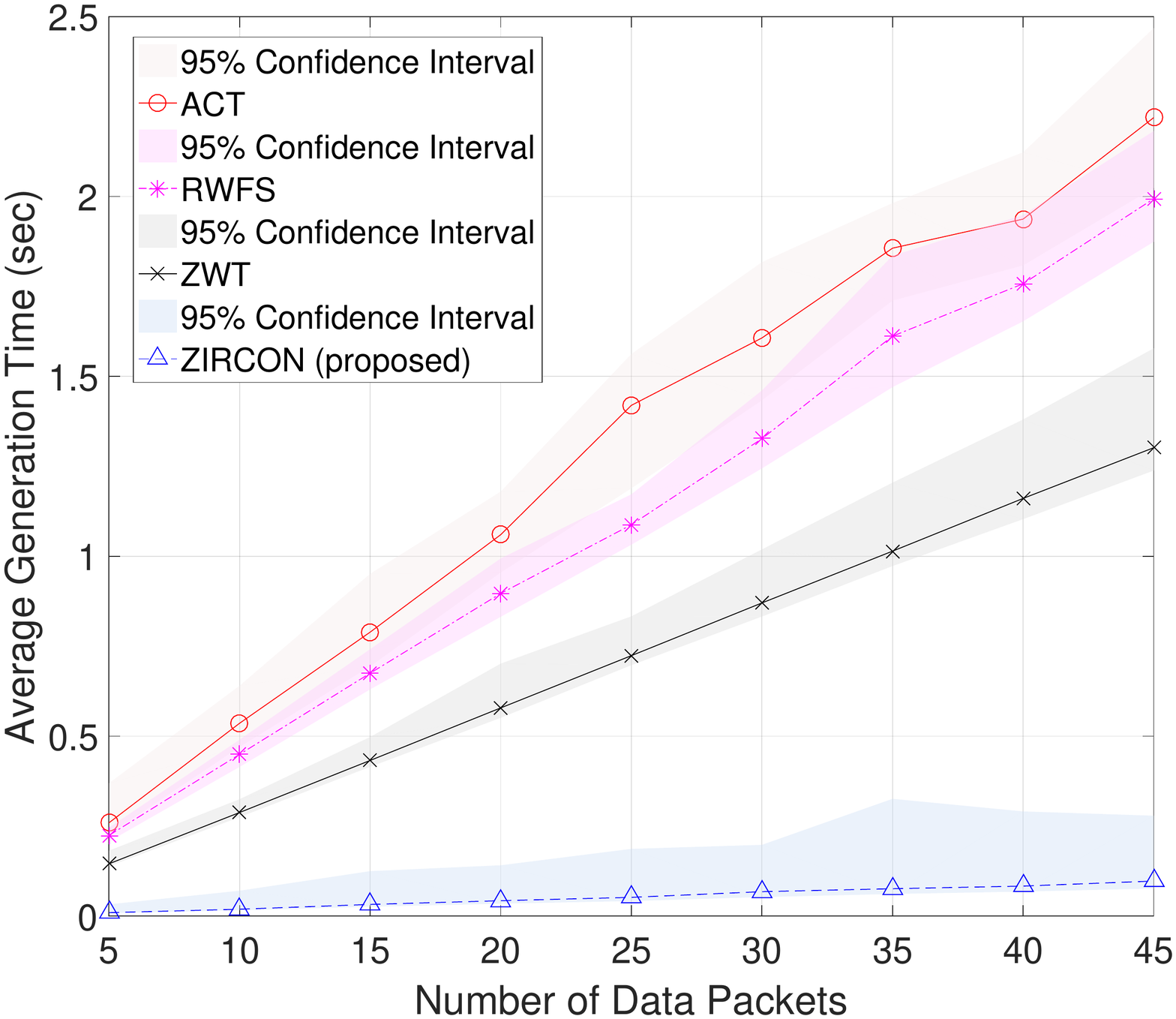}\label{fig:f12}}
  \hfill
  \subfloat[]{\includegraphics[trim = 11mm 5mm 21mm 2mm,clip,width=0.5\textwidth]{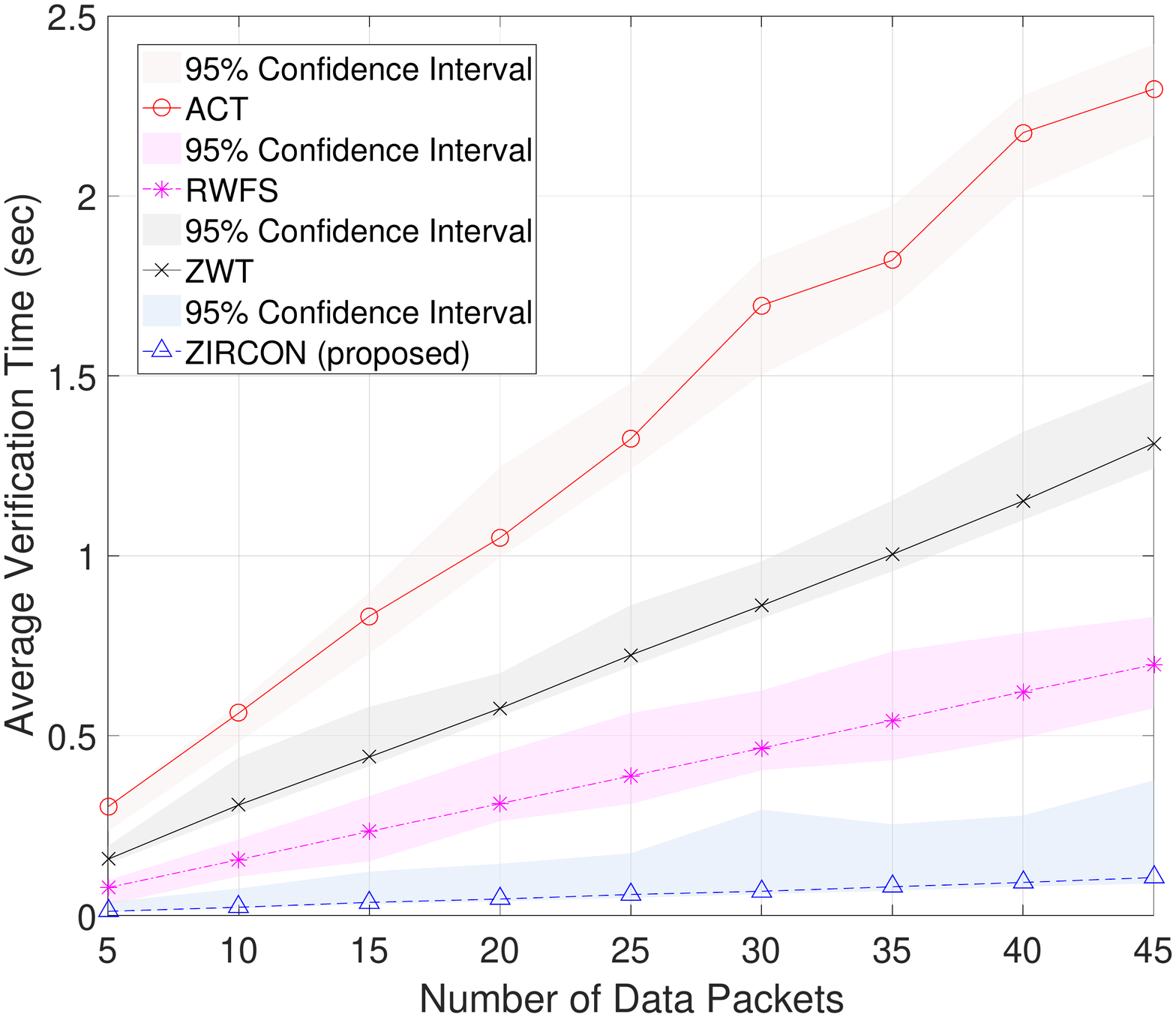}\label{fig:f13}}
  \caption{Computational time. (a) Watermark generation and embedding time. (b) Watermark verification time.}
\end{figure*}

\subsubsection{Energy consumption} Energy consumption evaluates the energy consumed by a sensor node from the power utilized by each node and the total time consumed in the sensor node operation steps as shown in Figure~\ref{fig14}. The energy consumed by a sensor node varies based on several basic energy consumption sources: processing time cost, radio transmission, sensor sensing, transient energy, and sleeping time cost~\cite{Halgamuge2009,Fu2013,Malka2009}. It is crucial to utilize less energy-consuming security mechanisms for \gls*{iot} networks due to the limited computation and power capabilities of sensor nodes. In the proposed scheme, we made our assumptions regarding energy consumption due to the fixed space required for watermark embedding. The phases that affect energy consumption in a sensor node are sensor node activation cost, watermark generation and embedding cost, data capturing cost, data transmission cost and cost for going to sleeping mode. The energy ($E_n$) of each sensor node in the network is computed according to Equation~\eqref{eq:3}. The power ($P_n$) utilized by each node is determined by node's hardware components, the network's data rate, and the communication protocols used by the network. In order to estimate the power consumption of the sensor node for numerical simulation we use the energy model in~\cite{Miller2005} based on Mica2 Motes. The time to complete a round of a sensor node operation specified in Figure~\ref{fig14} is $T_n$ which varies according to the data processing method and functionality of this node as shown in the figure. We assume, as in~\cite{Miller2005}, that the parameters used in the energy calculation are as follows: $T_{A}$ = 1ms (Active time cost), $T_{S}$ = 0.5 ms (Data sensing time), $T_{C}$ (Computation and processing time), $T_{TR}$ = 300 ms (Data transmission time), $T_{SL}$ = 299 ms (sleeping time cost), and $P_n$ = 30mW (Average power consumption of a single sensor node). Using Equation~\eqref{eq:5} we compute the energy of sensor nodes based on the previously specified parameters.

    \begin{equation} \label{eq:4}
    E_{n} = P_{n} \times T_{n}
    \end{equation}

    \begin{equation} \label{eq:5}
    E_{n} =  P_{n} \times (T_{A} + T_{S} + T_{C} + T_{TR} + T_{SL})
    \end{equation}
    
The analysis of the energy consumption of ZIRCON scheme compared to \gls*{rwfs}~\cite{Alromith2018}, \gls*{act}~\cite{Sun2013} and \gls*{zwt}~\cite{hameed2018} shows that our approach requires less energy for each operating node. This results in an increase in life time of our network compared to other networks. The higher energy consumption in \gls*{rwfs}~\cite{Alromith2018} is based on the computation of bit positions for watermark embedding and the encryption of captured data (that is used as an input to an HMAC function to obtain a final watermark) using homomophic encryption algorithm. This method is slower than conventional symmetric encryption methods because of the complex mathematics it requires. In comparison to homomorphic encryption, symmetric encryption is quicker and easier to use because uses a single key to encrypt and decrypt data. Also, in \gls*{act}~\cite{Sun2013} the use of asymmetric cryptography functions and group hashing requires more energy at each sensor node due to the additional computational overhead required for the public and private key operations. The existing scheme \gls*{zwt}~\cite{hameed2018}, which uses \gls*{des} for watermark encryption, dissipates higher energy than our proposed scheme which uses \gls*{aes} for sub-watermark encryption. Figures~\ref{fig15a} and \ref{fig15b} shows the energy consumption of our proposed scheme compared to existing state-of-the-art methods \gls*{rwfs}~\cite{Alromith2018}, \gls*{act}~\cite{Sun2013} and \gls*{zwt}~\cite{hameed2018}, for a single source node and an intermediate node respectively. It is clearly shown that ZIRCON requires less energy consumption at each node of the network. 

\begin{figure*}[!htbp]
\centerline{\includegraphics[width=0.65\textwidth]{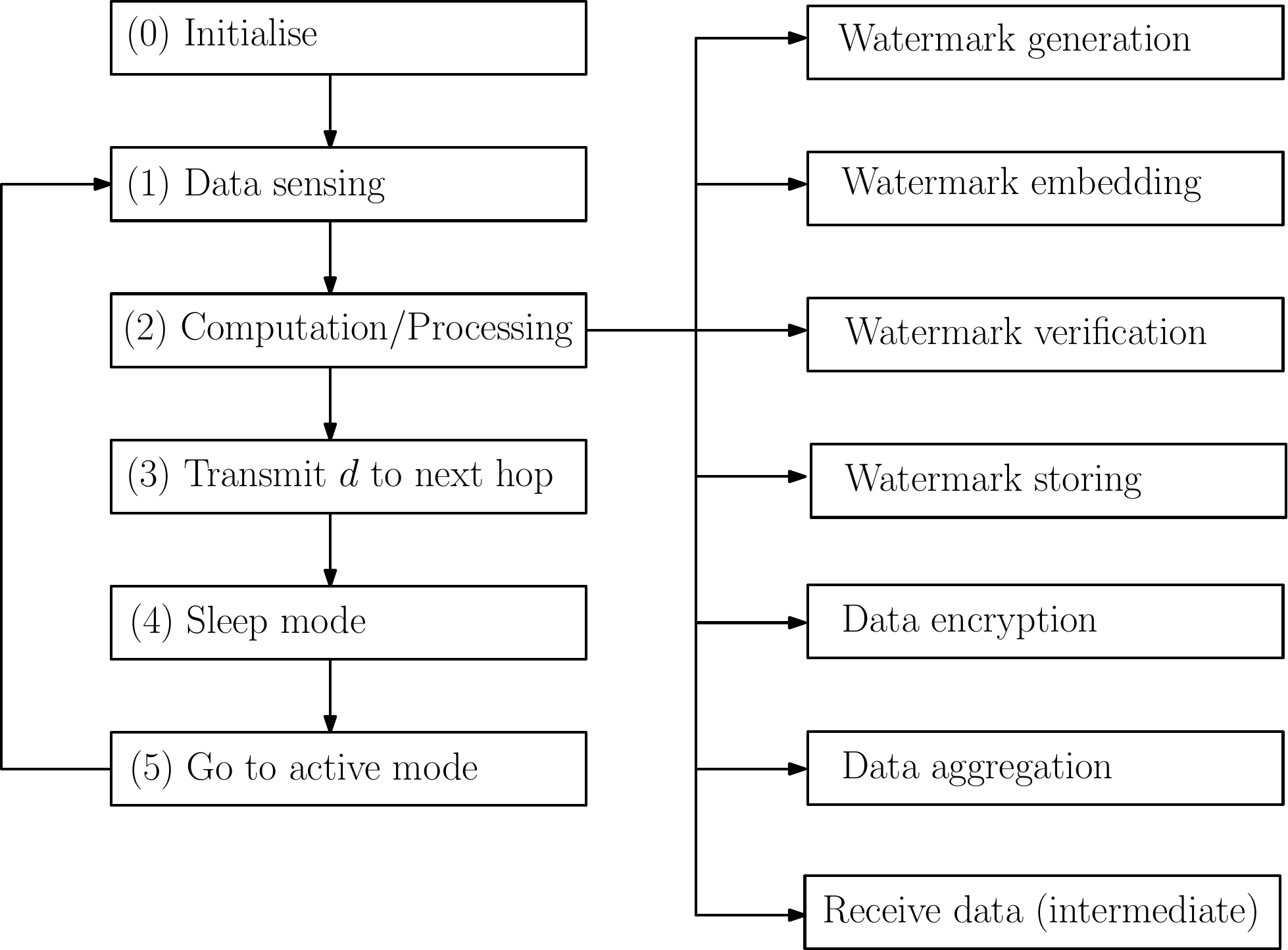}}
\caption{Sensor node operation cycle.}
\label{fig14}
\end{figure*}

\begin{figure*}[!htbp]
\centerline{\includegraphics[width=0.75\textwidth]{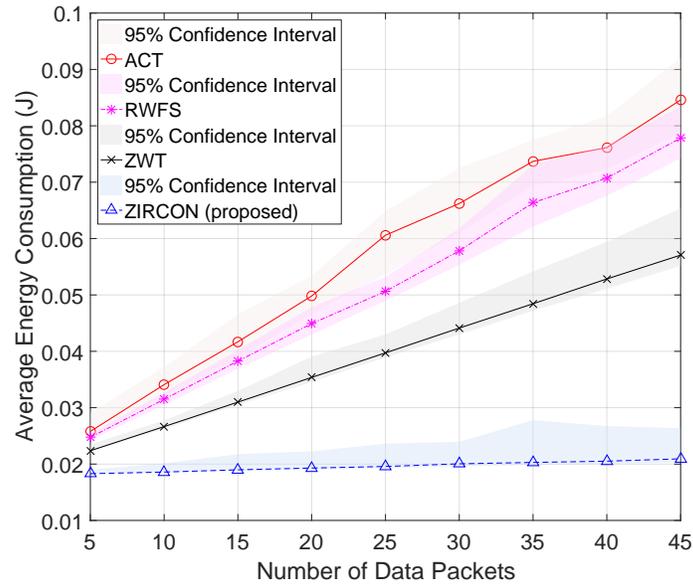}}
\caption{Energy consumption cost per single source node.}
\label{fig15a}
\end{figure*}

\begin{figure*}[!htbp]
\centerline{\includegraphics[width=0.75\textwidth]{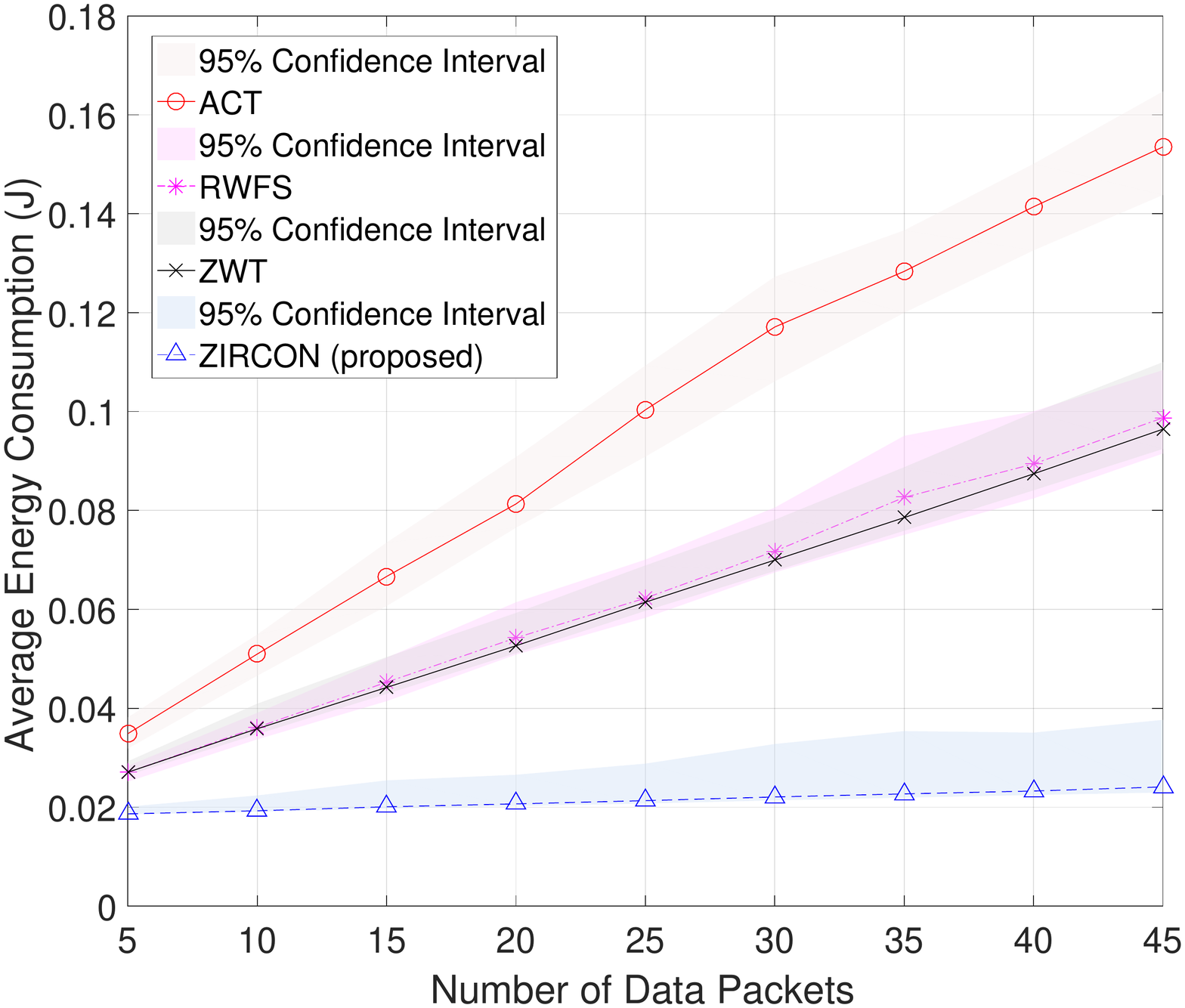}}
\caption{Energy consumption cost per single intermediate node.}
\label{fig15b}
\end{figure*}

\subsection{Cost Analysis} Regarding cost analysis, we compare ZIRCON with three state-of-the-art methods in terms of transmission data size and data packet length. These approaches are as follows: 

\begin{enumerate}
    \item The secure provenance framework $SProv$ \cite{ragib2009} that is adapted to sensor networks by \cite{Salmin2015}. The provenance record at a node $n_i$ is $p_i=<n_i, \mathrm{hash}(D_i),C_i>$, where $\mathrm{hash}(D_i)$ is a one way hash function of the updated data and $C_i = \mathrm{sign}(\mathrm{hash}(n_i,\mathrm{hash}(D_i)|C_{i-1})$ is an integrity checksum. This method is referred to as SSP. 
    \item The MAC-based provenance scheme which computes a MAC value and send it with the node ID as the provenance record. This method is referred to as MP \cite{Salmin2015}.
    \item A lightweight secure scheme BFP \cite{Salmin2015} that uses Bloom Filters to encode provenance information, which is sent along the data path with the data packet. 
\end{enumerate}

In our proposed scheme, a sensor node transmits both provenance information (IP address and timestamp) and a hash value as a zero-watermark. The IP address and the timestamp both have a size of 4 bytes. The source node encrypts the sub-watermark $sw_{f_{n,k,i}}$ and produce an encrypted sub-watermark of 16 bytes. Also, the source node computes the hash value from the extracted data payload, as shown in Algorithm~\ref{alg1}, and selects the first 8 bytes. This implies that the generated zero-watermark including the provenance record is 24 bytes. The provenance record is stored in a tamper-proof network database at each hop. Hence, each data packet holds only one generated zero-watermark in each hop. For SSP, to perform cryptographic hash operations, they utilize SHA-1 with a bit length of 160, and for generating digital signatures of 160 bits (ECDSA), they make use of the TinyECC library~\cite{Liu2008}. The node ID, which is 2 bytes long, results in each provenance record being 42 bytes in length. To implement MP, the provenance record is formed of node ID and a MAC value computed on each source node. It uses the TinySec library~\cite{Karlof2004} to compute the sensor CBC-MAC of size 4 bytes. Thus, the provenance record is of 6-byte size. In both schemes, SSP and MP, each node embeds its provenance information as a record with data packet, as the path length increases the provenance size increases linearly. This increase in provenance leads to an increase in the transmitted data packet size. In a multi-hop scenario, the provenance is $6 \times H$ bytes (\ie the path is formed of $H$ hops) for MP and $42 \times H$ bytes for SSP. However, in BFP, the provenance length depends on parameter selection of the Bloom Filter. For a given $H$ and a false positive probability $P_{fp} = 0.02$, the number of required bits to encode the provenance information is $m = (-H\cdot \ln(P_{fp}))/(\ln 2)^2$. In this case the length of a BF grows with the number of nodes. Figure~\ref{fig:f16} shows a comparison between ZIRCON, SSP, MP and BFP approaches in-terms of transmission data size in a single hop scenario. Similarly, the results for data packet length in a Multi-hop scenario for both schemes is shown in Figure~\ref{fig:f17}. In resource constrained networks, energy is mainly affected by data transmission, which increases as the data packet increase. The results show that ZIRCON performs better than SSP and MP as the number of hops increases in the sensor network. Also, our algorithm outperforms BFP in terms of provenance length and scalability as the size of the network increases, and as the number of hops exceeds 11. Our proposed model only encodes one provenance record $p_{n,k,i}$ with each data packet $d_{n,k}$ during transmission.

\begin{figure*}[!htbp]
  \centering
  \subfloat[]{\includegraphics[trim = 40.5mm 85mm 40.5mm 90mm, clip,width=0.5\textwidth]{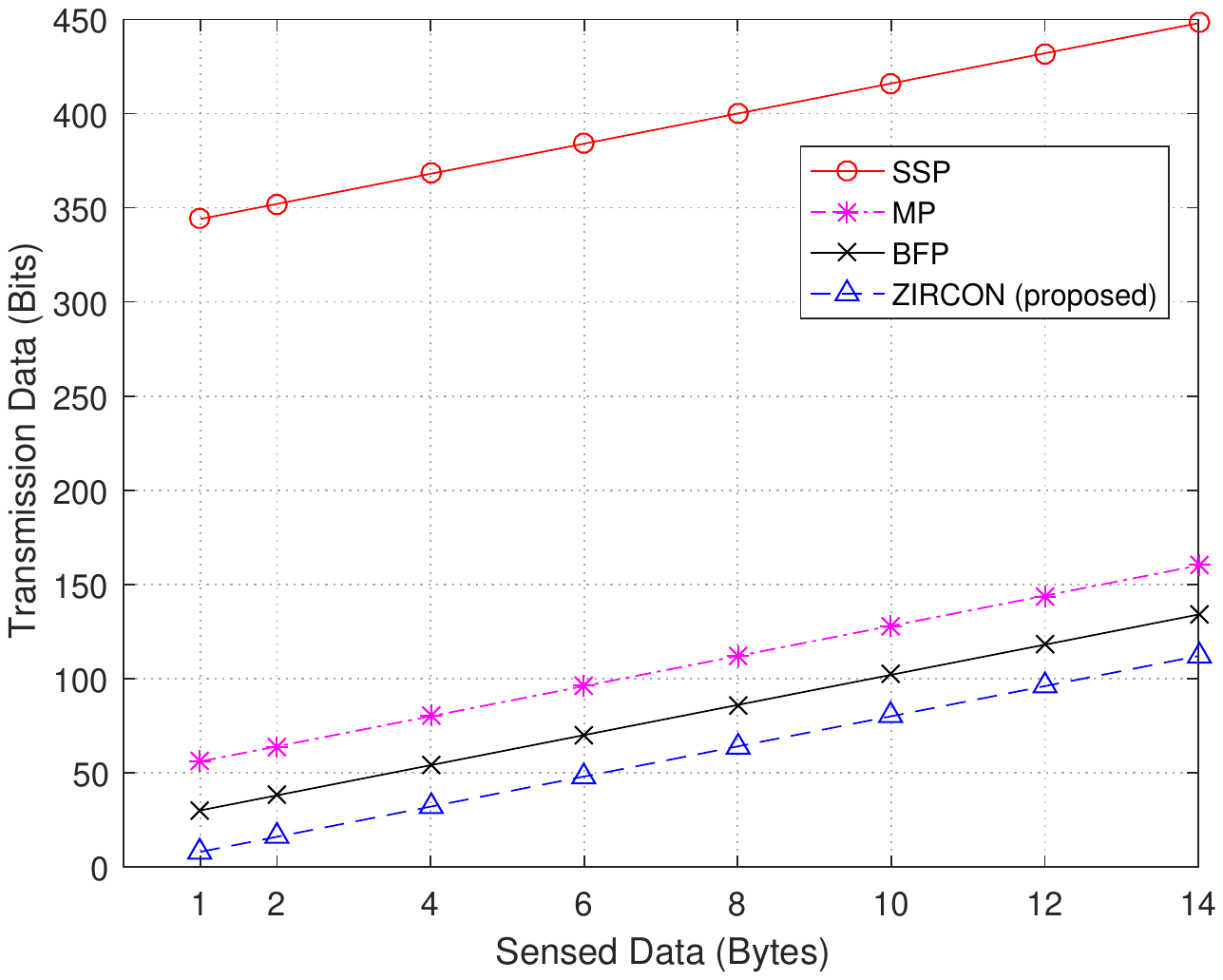}\label{fig:f16}}
  \hfill
  \subfloat[]{\includegraphics[trim = 40.5mm 85mm 40.5mm 90mm, clip,width=0.5\textwidth]{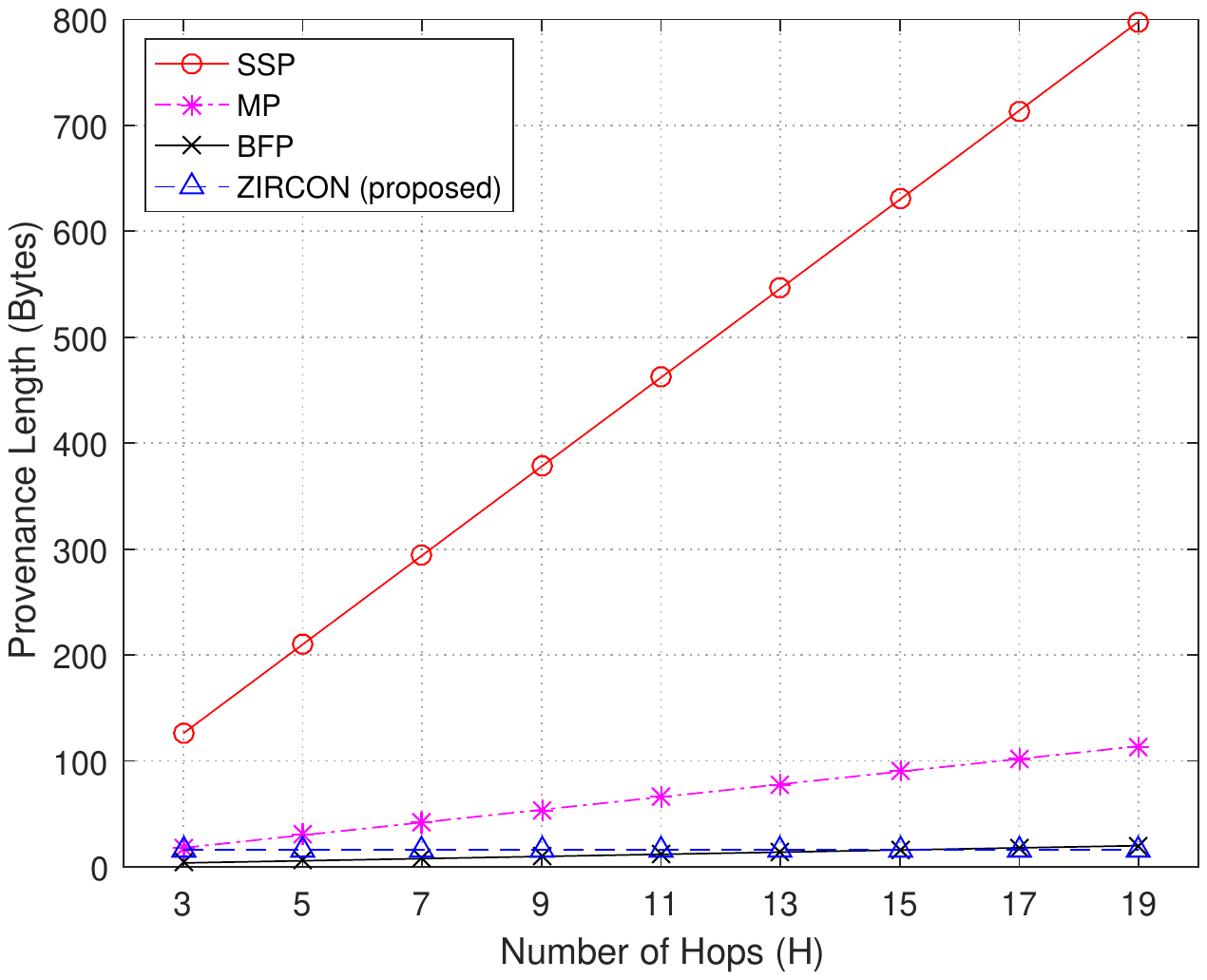}\label{fig:f17}}
  \caption{Cost comparison. (a) Transmission data size in single hop scenario. (b) Provenance length in Multi-hop scenario.} 
\end{figure*}

\section{Discussion}
\label{sec:discussion}

The related literature includes many proposed schemes for ensuring data integrity and secure provenance transmission in \gls*{wsn}s using digital watermarking. These models are elaborated in section~\ref{sec:relatedWork}. The limitations of such solutions were addressed in our proposed scheme. In this scheme we combine both data integrity and secure provenance transmission, taking into consideration the computational capabilities of sensor nodes in \gls*{iot} networks, while maintaining security standards. \gls*{iot} networks are vulnerable to many type of attacks. These networks are used in decision making processes that require high level of security. Moreover, it is essential in many situations to keep track of the data captured from sensor nodes to identify any malicious traffic in the context of intrusion detection systems. For this, it requires to overcome a set of challenges in order to securely transmit provenance information. The difficulties involve handling processing overhead of each network node, transmitting information about the origin of data in an efficient way without using extra bandwidth and quickly responding to any security breaches. Provenance information grows very fast, which requires transmitting large amount of provenance information with data packets. In fact, building the lineage of each data-packet requires storing the information of the data-packet including the complete set of nodes that were covered from source to destination. Embedding such vast amount of information with the data packet will result in a massive network overhead. This requires a solution for handling this amount of provenance information. This critical problem was not addressed in the related literature. In this context, we propose the use of a tamper-proof database to store these information that are embedded in watermarks at each node covered in the network. Hence, to obtain the required security standards, the proposed zero-watermarking approach generates two sub-watermarks that are used for integrity verification and secure provenance transmission. The sub-watermarks are based on one-way hash function (\ie SHA-2) and symmetric encryption (\ie \gls*{aes}). In our work, we provide an efficient and secure way to keep track of the whole network route that a piece of information has taken despite bandwidth overhead, storage limitations and computational overhead, while ensuring data integrity.

Managing internal data packets is a key component of improving \gls*{ids} efficiency. Analyzing each data packet by the \gls*{ids} at each node implies additional computational overhead. This issue was not addressed in the related literature, which only focus on data packets that are specified for sensed data. In our model, we propose a protocol for labeling internal managing data packets which allows to check for any attack at the level of these packets without the need to analyze it by \gls*{ids}. In our work, we validate our zero-watermarking algorithm through a security analysis that shows our approach is robust to many attacks based on an attack model. Additionally, we provide a performance evaluation to analyze computational time, energy consumption and cost analysis in comparison with related literature.   

As a result of our security scheme outlined and proposed in this paper, there are several areas for future study and improvement. The fast evolution of security attacks against \gls*{iot} networks, such as \gls*{ddos}, Botnets, Privacy invasion and Physical attack, requires the advancement in security measures to protect against these types of attacks and to ensure the security of \gls*{iot} networks. This presents an important call to discover ways to tackle the vast number of security attacks that are not yet studied in the area of securing data provenance in integrity in \gls*{iot} networks. Another issue is that large-scale \gls*{iot} networks that introduce the problem of large-scale provenance need to be analyzed. How to handle the huge lineage of data being transmitted over long data-path. Even in the presence of a database, how to manage methods to efficiently overseeing a large quantity of sensor nodes and the data they collect, along with information about its origin and the path it covers.

\section{Conclusion}
\label{sec:conclusion}

This paper addresses the problem of data integrity and secure transmission of provenance information for \gls*{iot} networks. We propose a zero-watermarking approach that embeds provenance information and data features with data packets and stores these watermarks in a tamper-proof network database. The security capabilities of ZIRCON makes it secure against different type of sensor network attacks, as proved using a formal security analysis. We have validated our findings by conducting representative simulations and compared our results with existing schemes based on different performance parameters. The results show that the proposed scheme is lightweight, has better computational efficiency and consumes less energy, compared to prior art. Perspectives for future work include testing this scheme to various attacks that are not included in our threat model and study the possibility of using this method for large-scaled provenance.


\bibliographystyle{IEEEtran}


\end{document}